\documentclass[11pt,draftcls, onecolumn]{IEEEtran}

\usepackage[english]{babel}
\usepackage{cite}
\usepackage{amsmath,amssymb,amsfonts}
\usepackage{listings}
\lstset{basicstyle=\ttfamily\small, breaklines=true, columns=fullflexible}
\usepackage{graphicx}
\usepackage{textcomp}
\usepackage{xcolor}
\usepackage{booktabs}
\usepackage{multirow}
\usepackage{array}
\usepackage{hyperref}

\usepackage{glossaries-extra}
\setabbreviationstyle[acronym]{long-short}
\glsdisablehyper


\usepackage{cite}
\usepackage{setspace}
\usepackage[T1]{fontenc} 
\usepackage[utf8]{inputenc} 
\usepackage{pdfpages} 
\usepackage{comment}

\usepackage{xcolor}
\usepackage{lmodern}

\usepackage{amsmath}
\usepackage{amssymb}
\usepackage{amsthm}
\usepackage{mathtools}
\usepackage{bm}
\usepackage{cases}
\usepackage{optidef}


\usepackage{array}
\usepackage{multicol,multirow,colortbl,arydshln}

\usepackage{float}
\usepackage[caption=false,font=footnotesize]{subfig}
\usepackage{epstopdf}
\usepackage{tikz}
\usepackage{tikz-3dplot}

\usepackage{booktabs}
\usepackage{tabularx}
\usepackage{url,hyperref}
\usepackage{acronym}

\usepackage{algorithm}
\usepackage{algpseudocode}
\usepackage{threeparttable}





\DeclareMathOperator{\diag}{diag}

\newtheorem{lemma}{Lemma}
\newtheorem{theorem}{Theorem}
\newtheorem{proposition}{Proposition}
\newtheorem{corollary}{Corollary}


\definecolor{gold}{rgb}{0.85,.66,0}


\newcommand{\K}{\mathcal{K}} 
\newcommand{\U}{\mathcal{U}} 
\newcommand{\D}{\mathcal{D}} 
\newcommand{\eye}{\mathbf{I}}

\newcommand{\E}{\mathbb{E}}
\newcommand{\C}{\mathbb{C}}

\DeclareMathOperator\tr{tr}



\newacronym{MIMO}{MIMO}{multiple-input multiple output}
\newacronym{mMIMO}{mMIMO}{massive MIMO}
\newacronym{XL-MIMO}{XL-MIMO}{extra-large scale MIMO}
\newacronym{PA}{PA}{pilot assignment}
\newacronym{US}{US}{user scheduling}
\newacronym{AWGN}{AWGN}{additive white Gaussian noise}
\newacronym{BS}{BS}{base station}
\newacronym{SA}{SA}{subarray}
\newacronym{AP}{AP}{access point}
\newacronym{ULA}{ULA}{uniform linear array}
\newacronym{UPA}{UPA}{uniform planar array}
\newacronym{LoS}{LoS}{line-of-sight}
\newacronym{NLoS}{NLoS}{non-LoS}
\newacronym{SNR}{SNR}{signal-to-noise ratio}
\newacronym{SINR}{SINR}{signal-to-interference-plus-noise ratio}
\newacronym{LS}{LS}{least-squares}
\newacronym{MSE}{MSE}{mean square error}
\newacronym{MMSE}{MMSE}{minimum mean square error}
\newacronym{P-MMSE}{P-MMSE}{partial MMSE}
\newacronym{L-MMSE}{L-MMSE}{local MMSE}
\newacronym{SE}{SE}{spectral efficiency}
\newacronym{QoS}{QoS}{quality of service}
\newacronym{VR}{VR}{visibility region}
\newacronym{UL}{UL}{uplink}
\newacronym{DL}{DL}{downlink}
\newacronym{mMTC}{mMTC}{massive machine type communications}
\newacronym{CSI}{CSI}{channel state information}
\newacronym{UE}{UE}{user equipment}
\newacronym{CPU}{CPU}{central processing unit}
\newacronym{LSF}{LSF}{large-scale fading}
\newacronym{SSF}{SSF}{small-scale fading}
\newacronym{LSFD}{LSFD}{large-scale fading decoding}
\newacronym{iid}{i.i.d.}{independent and identically distributed}
\newacronym{NAE}{NAE}{normalized absolute error}
\newacronym{MNAE}{MNAE}{mean normalized absolute error}
\newacronym{NMSE}{NMSE}{normalized mean square error}
\newacronym{MR}{MR}{maximum-ratio}
\newacronym{ZF}{ZF}{zero-forcing}
\newacronym{L-ZF}{L-ZF}{local ZF}
\newacronym{PZF}{PZF}{partial ZF}
\newacronym{RZF}{RZF}{regularized ZF}
\newacronym{TDD}{TDD}{time-division-duplexing}
\newacronym{FDD}{FDD}{frequency-division-duplexing}
\newacronym{KKT}{KKT}{Karush–Kuhn–Tucker}
\newacronym{5G}{5G}{fifth-generation}
\newacronym{6G}{6G}{sixth-generation}
\newacronym{RIS}{RIS}{reflective intelligent surface}
\newacronym{UPD}{UPD}{uniform power distance}
\newacronym{UPW}{UPW}{uniform plane wave}
\newacronym{NUPW}{NUPW}{non-uniform plane wave}
\newacronym{USW}{USW}{uniform spherical wave}
\newacronym{UPBW}{UPBW}{uniform parabolic wave}
\newacronym{NUSW}{NUSW}{non-uniform spherical wave}
\newacronym{NUEW}{NUEW}{non-uniform evanescent wave}
\newacronym{HD}{HD}{half‑duplex}
\newacronym{LTE}{LTE}{long term evolution}
\newacronym{RF}{RF}{radio frequency}
\newacronym{eMBB}{eMBB}{enhanced mobile broadband}
\newacronym{URLLC}{URLLC}{ultra-reliable and low latency communications}
\newacronym{ISAC}{ISAC}{integrated sensing and communication}
\newacronym{AI}{AI}{artificial intelligence}
\newacronym{ELAA}{ELAA}{extremely large aperture array}
\newacronym{CoMP}{CoMP}{coordinated multipoint}
\newacronym{C-RAN}{C-RAN}{cloud radio access network}
\newacronym{PDF}{PDF}{probability density function}
\newacronym{AoA}{AoA}{angle of arrival}
\newacronym{DoA}{DoA}{direction of arrival}
\newacronym{AoD}{AoD}{angle of departure}
\newacronym{DoD}{DoD}{direction of departure}
\newacronym{GA}{GA}{genetic algorithm}

\begin{document}

\title{\bf 
Joint Subarray Selection, User Scheduling, and Pilot Assignment for XL-MIMO}

\author{
\IEEEauthorblockN{
{Gabriel Avanzi Ubiali}\IEEEauthorrefmark{1}, 
{José Carlos Marinello Filho}\IEEEauthorrefmark{2}, 
{Taufik Abrão}\IEEEauthorrefmark{1}}\\
\IEEEauthorblockA{\IEEEauthorrefmark{1} \small Department of Electrical Engineering, State University of Londrina (UEL), Londrina, Brazil}\\
\IEEEauthorblockA{\IEEEauthorrefmark{2}\small Department of Electrical Engineering, Federal Technological University of Paraná (UTFPR), Cornélio Procópio, Brazil}\\
E-mail: gabriel.ubiali@uel.br, jcmarinello@utfpr.edu.br, taufik@uel.br
}

\maketitle

\begin{abstract}
\Gls{XL-MIMO} is a key technology for meeting \gls{6G} requirements for high-rate connectivity and uniform \gls{QoS}; however, its deployment is challenged by the prohibitive complexity of resource management based on instantaneous \gls{CSI}. To address this intractability, this work derives novel closed-form deterministic \gls{SINR} expressions for both centralized and distributed uplink operations. Valid for Rician fading channels with \gls{MMSE} receive combining and \gls{MMSE} channel estimation, these expressions depend exclusively on long-term channel statistics, providing a tractable alternative to computationally expensive instantaneous \gls{CSI}-driven optimization. Building on these results, we develop statistical-\gls{CSI}-based algorithms for joint subarray selection, users scheduling, and pilot assignment, leveraging the derived \gls{SINR} approximations to maximize the minimum \gls{SE} among scheduled users while preserving computational tractability. The proposed framework exploits the spatial sparsity of \gls{UE} \glspl{VR} to enable more aggressive pilot reuse than is possible in conventional massive MIMO. Numerical results validate the high accuracy of the derived \gls{SINR} approximations and demonstrate that the proposed algorithms significantly enhance fairness and throughput in crowded scenarios.
\end{abstract}

\begin{IEEEkeywords}
XL-MIMO, deterministic SINR, large-scale fading (LSF) subarray selection, user scheduling, pilot assignment, fairness, 
spectral efficiency.
\end{IEEEkeywords}

\section{Introduction}
\label{sec:introduction}

The evolution toward \gls{6G} networks demands unprecedented \gls{SE} and link reliability \cite{2024_XLMIMO_tutorial,6824752,7894280,7414384}. \Gls{XL-MIMO}, characterized by antenna deployments with hundreds or thousands of elements, have emerged as a key enabling technology to meet these requirements \cite{CFbook,Bjornson2019_WhatIsNext,ChannelEstimation_XLMIMO,9529197}. Unlike co-located massive MIMO, \gls{XL-MIMO} features antenna arrays distributed over large areas (typically organized into discrete subarrays), introducing near-field effects and spatial non-stationarity that fundamentally alter system design principles \cite{2024_XLMIMO_tutorial,9903389,carvalho2020nonstationarities}.



In \gls{XL-MIMO} deployments, each \gls{UE} is typically close to only a few subarrays. Consequently, serving each \gls{UE} with all antennas is neither necessary nor scalable. Intelligent subarray selection ensures \glspl{UE} are served only by subarrays contributing significantly to their signal, preserving service quality while reducing computational complexity \cite{Marinello2020,Location-Based_VR_Recognition,CFbook}. Furthermore, in dense scenarios where \glspl{UE} outnumber spatial degrees of freedom, user scheduling is essential to manage interference and satisfy \gls{QoS} constraints \cite{JOAO2023,10857377}.

A fully centralized uplink architecture, where a \gls{CPU} performs joint detection, achieves the highest \gls{SE} by enabling global interference suppression \cite{Bjornson2020_making_centralizedCF_competitive}. However, its deployment is often limited by fronthaul constraints. A more scalable alternative is distributed processing, where channel estimation and data detection are performed locally at the subarrays. This enhances implementation flexibility and scalability, as new subarrays can be integrated without upgrading the \gls{CPU} \cite[Sec. 5.2]{CFbook}.

Coherent processing relies on accurate \gls{CSI}. While ideally \glspl{UE} would use orthogonal pilot sequences, pilot reuse is often needed due to limited channel coherence blocks \cite{CFbook,CF_PA_HungarianAlgorithm}. Pilot sharing reduces estimation accuracy and induces statistical dependence between channel estimates. Fortunately, the spatial sparsity of \glspl{VR} in \gls{XL-MIMO} permits more aggressive pilot reuse than in conventional massive MIMO, 
where every scheduled \gls{UE} is 
``seen'' by all antennas.

We develop statistical-\gls{CSI}-based algorithms for joint subarray selection, user scheduling, and pilot assignment. Our key innovation lies in the use of deterministic \gls{SINR} approximations derived from channel statistics. Because these expressions depend on slowly varying statistical \gls{CSI} rather than instantaneous \gls{CSI}, they remain valid over multiple coherence blocks. This significantly reduces computational and signaling overhead, providing a highly efficient framework for resource allocation. Furthermore, channel hardening in \gls{XL-MIMO} ensures that statistical-\gls{CSI}-driven algorithm achieves \gls{SE} comparable to schemes relying on instantaneous \gls{CSI}.



\subsection{Related Works}
\label{subsec:intro:related_works}

Deterministic \gls{SINR} expressions have been extensively investigated in multi-user massive MIMO, \gls{XL-MIMO}, and cell-free architectures. However, most existing expressions are derived under spatially uncorrelated Rayleigh fading with simple \gls{MR} processing, which performs poorly in interference-limited regimes. In realistic \gls{6G} micro-urban environments, channels are characterized by a combination of \gls{LoS} and \gls{NLoS} components, best captured by the Rician fading model. Furthermore, while \gls{MR} is analytically convenient, \gls{MMSE} combining is essential for interference suppression in dense networks \cite{Bjornson2020_making_centralizedCF_competitive}.

The literature regarding deterministic \gls{SINR} can be categorized by channel models and combining schemes. Expressions for spatially uncorrelated Rician fading are available in \cite{9618945,8952782,10817111,10858168,Hosany2022,10192084,9099874}, but assume \gls{MR} processing or perfect \gls{CSI}. Conversely, works focusing on spatially correlated Rayleigh fading, such as \cite{10770206,statistical_CSI_based_algorithms,XL_MIMO_letter} for perfect \gls{CSI} and \cite{9174860,9042221,9462554,9018056,10201892,Li2024,ergodicSINRproof} for imperfect \gls{CSI}, neglect the \gls{LoS} component and generally restrict analysis to \gls{MR} combiners, thereby limiting their applicability in dense deployments. Regarding other combining schemes, \cite{statistical_CSI_based_algorithms} and \cite{ergodicSINRproof} investigate \gls{ZF} and \gls{RZF} variants, under perfect and imperfect \gls{CSI}, respectively.


More recent efforts have addressed spatially correlated Rician fading with imperfect \gls{CSI} and \gls{MR} processing\cite{9099874,10507003,9684861,9276421,10041946,Kaur2023}. A notable exception is \cite{Kaur2023}, which analyzes \gls{RZF} precoding. However, the resulting expressions are not fully closed-form, requiring iterative solutions of coupled fixed-point equations
.

To the best of our knowledge, no existing work provides closed-form deterministic \gls{SINR} expressions that jointly account for: (i) spatially correlated Rician fading, (ii) under imperfect \gls{CSI}, where channel estimates are obtained via \gls{MMSE} estimation, and (iii) \gls{MMSE} receive combining. This paper fills this gap, providing a robust analytical framework for both centralized and distributed operations.

\subsection{Contributions}
\label{subsec:intro:contributions}

This paper's contributions are summarized as:

\begin{itemize}
    \item \textbf{Novel closed-form deterministic SINR expressions:} We derive original, closed-form deterministic expressions for the uplink \gls{SINR} under both centralized and distributed \gls{XL-MIMO} operations. These expressions are unique in that they jointly account for spatially correlated Rician fading, \gls{MMSE} receive combining, and imperfect \gls{CSI} obtained through \gls{MMSE} channel estimation. By depending exclusively on statistical \gls{CSI},  they enable rapid performance evaluation and provide a tractable analytical foundation for optimization. Numerical results validate that these approximations closely track the true \gls{SINR}.
    \item \textbf{Statistical-CSI-driven resource allocation:} We develop algorithms for joint subarray selection, user scheduling, and pilot assignment that leverage the derived \gls{SINR} approximations to maximize the minimum \gls{SE} among the scheduled \glspl{UE}. By utilizing only slowly-varying channel statistics, the proposed methods drastically reduce computational complexity and signaling overhead compared to instantaneous-\gls{CSI} benchmarks while maintaining near-optimal performance. Numerical results demonstrate that our statistical approach closely matches the performance of instantaneous-\gls{CSI}-based strategies.
\end{itemize}

\section{XL-MIMO Channel and System Models}
\label{sec:system_model}


Consider a system with $L$ subarrays, coordinated by a \gls{CPU}, serving $K$ single-antenna \glspl{UE}. Each subarray utilizes a \gls{UPA} with $M$ antennas. Defining the sets of all subarrays and \glspl{UE} as $\D = \{1, \dots, L\}$ and $\K = \{1, \dots, K\}$, respectively, we denote $\mathcal{U} \subseteq \mathcal{K}$ as the set of scheduled \glspl{UE}.

Each \gls{UE} $k \in \mathcal{U}$ is served by a subarray subset $\D_k \subseteq \D$ of cardinality $L_k = \lvert \D_k \rvert$, while subarray $l$ serves a \gls{UE} subset $\U_l \subseteq \U$ of size $U_l = \lvert \U_l \rvert$. For convenience, define the selection matrix $\mathbf{D}_k = \operatorname{blkdiag} \left(\mathbf{D}_{k1}, 
\ldots, \mathbf{D}_{kL}\right)$, where
\begin{equation}
    \mathbf{D}_{kl} = 
    \begin{cases}
        \mathbf{I}_M & \text{if } l \in \D_k, \\
        \mathbf{0}_M & \text{otherwise.}
    \end{cases}
\label{eq:D_kl}
\end{equation}

We define three subarray selection criteria: \textit{random}; \textit{LSF-based}, which prioritizes signal strength by selecting subarrays with the highest \gls{LSF} channel gains; and \textit{SINR-based}, which optimizes the trade-off between signal power and interference by selecting subarrays with the highest local \glspl{SINR}.

During uplink data transmission, subarray $l$ receives the signal
\begin{equation}
    \mathbf{y}_l^\mathrm{ul} = \sum_{i\in\U} \mathbf{h}_{il} x_i + \mathbf{n}_l^\mathrm{ul},
\label{eq:y_l_UL}
\end{equation}
where $\mathbf{h}_{il}$ is the channel vector between \gls{UE} $i$ and subarray $l$, $x_i\in\C$ is the zero-mean symbol transmitted by \gls{UE} $i$, with transmit power $p_i = \E\{ \lvert x_i \rvert^2 \}$, and $\mathbf{n}_l^\mathrm{ul} \sim \mathcal{N}_\C(\mathbf{0}, \sigma_\mathrm{n}^2 \mathbf{I}_M)$ denotes the \gls{AWGN} vector.

The channel vector between \gls{UE} $k$ and subarray $l$ comprises a deterministic \gls{LoS} component, $\overline{\mathbf{h}}_{kl}$, and a stochastic \gls{NLoS} component, $\check{\mathbf{h}}_{kl} \sim \mathcal{N}_\mathbb{C} (\mathbf{0}, \mathbf{R}_{kl})$, where $\mathbf{R}_{kl} \in \mathbb{C}^{M \times M}$ is the spatial covariance matrix. It follows the distribution $\mathbf{h}_{kl} \sim \mathcal{N}_\mathbb{C} (\omega_{kl} \overline{\mathbf{h}}_{kl}, \mathbf{R}_{kl})$, where $\omega_{kl} \in \{0, 1\}$ is a Bernoulli random variable indicating the presence of a \gls{LoS} path. The average channel gain between \gls{UE} $k$ and the antennas of subarray $l$ is
\begin{equation}
    \beta_{kl} = \frac{1}{M} \E\{\lVert \mathbf{h}_{kl} \rVert^2\} = \frac{1}{M} \tr(\mathbf{Q}_{kl}),
\end{equation}
where $\mathbf{Q}_{kl} = \E\{\mathbf{h}_{kl} \mathbf{h}_{kl}^\mathrm{H}\} = \overline{\mathbf{h}}_{kl} \overline{\mathbf{h}}_{kl}^\mathrm{H} + \mathbf{R}_{kl}$ is the channel correlation matrix. Similarly, the channel gain associated only to the \gls{NLoS} component is $\check{\beta}_{kl} = \frac{1}{M} \tr(\mathbf{R}_{kl})$. The average \gls{LSF} channel gain across all $L$ subarrays for \gls{UE} $k$ is $\beta_k = \frac{1}{L} \sum_{l=1}^L \beta_{kl}$.

In modular \gls{XL-MIMO}, the large inter–subarray spacing justifies modeling the \gls{NLoS} channels of different subarrays as statistically independent. Hence, $\E\{\check{\mathbf{h}}_{kl} \check{\mathbf{h}}_{kl'}^\mathrm{H}\} = \mathbf{R}_{kl} \delta_{ll'}$, where $\delta_{ll'}$ is the Kronecker delta. As a result, the overall channel covariance matrix for user $k$ 
is taken as a block‐diagonal matrix, $\mathbf{R}_k = \operatorname{blkdiag} \left( \mathbf{R}_{k1}, \mathbf{R}_{k2}, \ldots, \mathbf{R}_{kL} \right)$, and the collective channel vector $\mathbf{h}_k = [\mathbf{h}_{k1}^\mathrm{T}, \ldots, \mathbf{h}_{kL}^\mathrm{T}]^\mathrm{T} \in \C^{ML}$ is distributed as $\mathbf{h}_k \sim \mathcal{N}_\mathbb{C} (\overline{\mathbf{h}}_k, \mathbf{R}_k)$, where $\overline{\mathbf{h}}_k = [\overline{\mathbf{h}}_{k1}^\mathrm{T}, \ldots, \overline{\mathbf{h}}_{kL}^\mathrm{T}]^\mathrm{T}$.



\subsection{MMSE Channel Estimation}
\label{subsec:SINR:system_model:channel_estimation:MMSE}

The \gls{CPU} is assumed to have perfect knowledge of the channel statistics (mean vectors and covariance matrices), as these remain approximately constant over numerous coherence blocks.\footnote{Such statistics can be acquired during connection setup 
\cite{LOS_NLOS_2022}.} Under \gls{TDD} operation, instantaneous channels are estimated via uplink pilot signaling. Each coherence block of length $\tau_\mathrm{c}$ consists of $\tau_\mathrm{p}$ symbols reserved for pilots and $\tau_\mathrm{c} - \tau_\mathrm{p}$ symbols dedicated to data transmission. This study focuses exclusively on the uplink data transmission. The pilot assigned to \gls{UE} \(k\) is denoted by \(t_k\), and its corresponding sequence is \(\boldsymbol{\phi}_{t_k} \in \C^{\tau_\mathrm{p}}\). The sequences 
form an orthogonal set:
\begin{equation}
    \boldsymbol{\phi}_{t_k}^\mathrm{H} \boldsymbol{\phi}_{t_i} = 
    \begin{cases}
        \tau_\mathrm{p}, & t_i = t_k, \\
        0, & \text{otherwise}.
    \end{cases}
\label{eq:pilot_orthogonality}
\end{equation}

Let \(\mathcal{P}_t = \{k\in\K \mid t_k = t\}\) denote the set of \glspl{UE} using pilot \(t\). The \gls{MMSE} estimator exploits the channel statistics to separate pilot-sharing \glspl{UE} and minimize the estimation \gls{MSE}. Denote by \(\widehat{\mathbf{h}}_{kl}\) the \gls{MMSE} estimate of \(\mathbf{h}_{kl}\) and by \(\widetilde{\mathbf{h}}_{kl} = \mathbf{h}_{kl} - \widehat{\mathbf{h}}_{kl}\) the corresponding estimation error. They are independent and distributed as \cite{[2019]Massive_MIMO_With_Spatially_Correlated_Rician_Fading_Channels}:
\begin{align}
    \widehat{\mathbf{h}}_{kl}
    &\sim \mathcal{N}_\C\left( \overline{\mathbf{h}}_{kl}, \mathbf{R}_{kl} - \mathbf{C}_{kl} \right),
\label{eq:hkl_est_distribution}\\
    \widetilde{\mathbf{h}}_{kl}
    &\sim \mathcal{N}_\C\left( \mathbf{0}_{M}, \mathbf{C}_{kl} \right),
\label{eq:hkl_err_distribution}
\end{align}
where the error covariance matrix is
\begin{equation}
    \mathbf{C}_{kl} = \mathbf{R}_{kl} - p_k \tau_\mathrm{p} \mathbf{R}_{kl} \boldsymbol{\Psi}_{t_kl}^{-1} \mathbf{R}_{kl},
\label{eq:Ckl_MMSE}
\end{equation}
and $\boldsymbol{\Psi}_{tl} = \sum_{i\in\mathcal{P}_t} p_i \tau_\mathrm{p} \mathbf{R}_{il} + \sigma_\mathrm{n}^2 \eye_M$.

The estimation accuracy can be measured by the \gls{NMSE}, evaluated in subarray $l$ and in the entire array by, respectively \cite{CFbook}:
\begin{align}
    \gamma_{kl}
    &= \frac{ \E \{ \lVert \widetilde{\mathbf{h}}_{kl} \rVert^2 \} }{ \E \{ \lVert \mathbf{h}_{kl} \rVert^2 \} }
    = \frac{\tr( \mathbf{C}_{kl} )}{\tr( \mathbf{Q}_{kl} )},
\label{eq:hkl_est:NMSE} \\
    \gamma_k
    &= \frac{\E\{\lVert \widetilde{\mathbf{h}}_k \rVert^2\}}{\E\{\lVert \mathbf{h}_k \rVert^2\}}
    = \frac{\tr(\mathbf{C}_k)}{\tr(\mathbf{Q}_k)},
\label{eq:hk_est:NMSE}
\end{align}
where $\mathbf{C}_k = \operatorname{blkdiag} \left( \mathbf{C}_{k1}, \ldots, \mathbf{C}_{kL} \right)$, $\widetilde{\mathbf{h}}_k = [\widetilde{\mathbf{h}}_{k1}^\mathrm{T}, \ldots, \widetilde{\mathbf{h}}_{kL}^\mathrm{T}]^\mathrm{T}$, and $\mathbf{Q}_k = \E\{\mathbf{h}_k \mathbf{h}_k^\mathrm{H}\} = \overline{\mathbf{h}}_k \overline{\mathbf{h}}_k^\mathrm{H} + \mathbf{R}_k$.

\subsection{Centralized Uplink Operation}
\label{subsec:system_model:centralized}

In centralized operation, the \gls{CPU} computes the collective receive combiner $\mathbf{v}_k = [\mathbf{v}_{k1}^\mathrm{T}, \ldots, \mathbf{v}_{kL}^\mathrm{T} ]^\mathrm{T}
$ and jointly processes the uplink signals from the selected subarrays to estimate $x_k$, resulting in $\widehat{x}_k 
= \mathbf{v}_k^\mathrm{H} \mathbf{D}_k \mathbf{y}^\mathrm{ul}$
, where $\mathbf{y}^\mathrm{ul} = [(\mathbf{y}_1^\mathrm{ul})^\mathrm{T} ,\ldots, (\mathbf{y}_L^\mathrm{ul})^\mathrm{T}]^\mathrm{T}$. The instantaneous, centralized uplink \gls{SINR} for \gls{UE} $k$ is given by\cite[Ch.~5]{CFbook}
\begin{align}
    \Gamma_k^\mathrm{cent}
    &= \frac{p_k \lvert \mathbf{v}_k^\mathrm{H} \mathbf{D}_k \widehat{\mathbf{h}}_k \rvert^2}{ \displaystyle \mathbf{v}_k^\mathrm{H} \mathbf{D}_k \mathbf{Z}_k \mathbf{D}_k \mathbf{v}_k },
\label{eq:cent:SINRk}
\end{align}
where $\widehat{\mathbf{h}}_k = [\widehat{\mathbf{h}}_{k1}^\mathrm{T}, \dots, \widehat{\mathbf{h}}_{kL}^\mathrm{T}]^\mathrm{T}$,
\begin{equation}
    \mathbf{Z}_k = \mathbf{W} - p_k \mathbf{D}_k \widehat{\mathbf{h}}_k \widehat{\mathbf{h}}_k^\mathrm{H} \mathbf{D}_k,
\end{equation}
and
\begin{equation}
    \mathbf{W} = \sum_{i \in \U} p_i \mathbf{D}_k (\widehat{\mathbf{h}}_i \widehat{\mathbf{h}}_i^\mathrm{H} + \mathbf{C}_i) \mathbf{D}_k + \sigma_\mathrm{n}^2 \eye_{ML}.
\end{equation}

This \gls{SINR} is maximized by the \gls{MMSE} combining vector, which is given by \cite[Ch.~5]{CFbook}
\begin{equation}
    \mathbf{v}_k
    = p_k \mathbf{W}^{-1} \mathbf{D}_k \widehat{\mathbf{h}}_k,
\label{eq:cent:vk_MMSE}
\end{equation}
and yields the maximum \gls{SINR} value
\begin{align}
    \Gamma_k^\mathrm{cent}
    &= p_k \widehat{\mathbf{h}}_k^\mathrm{H} \mathbf{D}_k \mathbf{Z}_k^{-1} \mathbf{D}_k \widehat{\mathbf{h}}_k
    .
\label{eq:cent:SINRk_MMSE}
\end{align}


The resulting \gls{SE} is given by $\eta_k^\mathrm{cent} = g(\Gamma_k^\mathrm{cent})$, where
\begin{equation}
    g(x) = \left(1 - \frac{\tau_\mathrm{p}}{\tau_\mathrm{c}}\right) \log_2 (1 + x).
\label{eq:g}
\end{equation}

As we focus exclusively on the uplink, the \gls{SE} expression assumes that each channel coherence block is divided into uplink pilot transmission and uplink data transmission, ignoring downlink slots.

\subsection{Distributed Uplink Operation}
\label{subsec:system_model:distributed}

In the distributed scheme, each serving subarray $l \in \D_k$ first estimates the uplink symbol of \gls{UE} $k$ using the local combining vector $\mathbf{v}_{kl}$ as:
\begin{equation}
    \hat{x}_{kl} = \mathbf{v}_{kl}^\mathrm{H} \mathbf{y}_l^\mathrm{ul}, \qquad l \in \D_k.
\label{eq:dist:xkl_est}
\end{equation}

Subsequently, the \gls{CPU} forms the final estimate $\hat{x}_k$ by aggregating these local estimates:
\begin{equation}
    \hat{x}_k = \sum_{l\in\D_k} \mu_{kl} \, \hat{x}_{kl}.
\label{eq:dist:xk_est}
\end{equation}

The weights satisfy $\sum_{l \in \mathcal{D}_k} \mu_{kl} = 1$ for all $k \in \mathcal{U}$, where $\mu_{kl} > 0$ if $l \in \mathcal{D}_k$, and $\mu_{kl} = 0$ otherwise.


The instantaneous local \gls{SINR} for \gls{UE} $k$ evaluated at a subarray $l$ that serves this \gls{UE} is given by\cite[Ch.~5]{CFbook}
\begin{align}
    \Gamma_{kl}^\mathrm{dist}
    &= \frac{p_k \left\lvert \mathbf{v}_{kl}^\mathrm{H} \widehat{\mathbf{h}}_{kl} \right\rvert^2
    }{\displaystyle
    \sum_{\substack{i\in\U\\i\neq k}}
    p_i \left\lvert \mathbf{v}_{kl}^\mathrm{H} \widehat{\mathbf{h}}_{il} \right\rvert^2 + \sum_{i\in\U} p_i \mathbf{v}_{kl}^\mathrm{H} \mathbf{C}_{il} \mathbf{v}_{kl} + \sigma_\mathrm{n}^2 \left\lVert \mathbf{v}_{kl} \right\rVert^2 }.
\label{eq:dist:SINRkl}
\end{align}

This \gls{SINR} is maximized by the \gls{L-MMSE} combiner, given by\cite[page 149]{CFbook}
\begin{equation}
    \mathbf{v}_{kl} =
    p_k \mathbf{W}_l^{-1} 
    \widehat{\mathbf{h}}_{kl},
\label{eq:dist:vkl_LMMSE}
\end{equation}
which yields the maximum value
\begin{align}
    \Gamma_{kl}^\mathrm{dist}
    = p_k \widehat{\mathbf{h}}_{kl}^\mathrm{H} \mathbf{Z}_{kl}^{-1} \widehat{\mathbf{h}}_{kl}
    = p_k \tr \left( \mathbf{Z}_{kl}^{-1} \widehat{\mathbf{h}}_{kl} \widehat{\mathbf{h}}_{kl}^\mathrm{H} \right),
\label{eq:dist:SINRkl_LMMSE}
\end{align}
where
\begin{equation}
    \mathbf{W}_l
    = \sum_{i\in\U} p_i (\widehat{\mathbf{h}}_{il} \widehat{\mathbf{h}}_{il}^\mathrm{H} + \mathbf{C}_{il}) + \sigma_\mathrm{n}^2 \eye_M
\label{eq:dist:Wl}
\end{equation}
and
\begin{equation}
    \mathbf{Z}_{kl}
    = \mathbf{W}_l - p_k \widehat{\mathbf{h}}_{kl} \widehat{\mathbf{h}}_{kl}^\mathrm{H}
    .
\label{eq:dist:Zkl}
\end{equation}


The global \gls{SINR} for \gls{UE} $k$ under distributed operation, denoted as $\Gamma_k^\mathrm{dist}$, is given by \eqref{eq:dist:SINRk}, and the resulting \gls{SE} is given by $\eta_k^\mathrm{dist} = g(\Gamma_k^\mathrm{dist})$.

\begin{figure*}[!t]
\begin{equation}
    \Gamma_k^\mathrm{dist}
    = \frac{
        p_k \left\lvert
        \displaystyle \sum_{l\in\D_k} \mu_{kl} \mathbf{v}_{kl}^\mathrm{H} \widehat{\mathbf{h}}_{kl}
        \right\rvert^2}{
        \displaystyle
        \sum_{i\in\U \setminus\{k\}} p_i
            \left\lvert \sum_{l\in\D_k} \mu_{kl} \mathbf{v}_{kl}^\mathrm{H} \widehat{\mathbf{h}}_{il}\right \rvert^2
        + \sum_{i\in\U} p_i \sum_{l\in\D_k} \mu_{kl}^2
            \mathbf{v}_{kl}^\mathrm{H} \mathbf{C}_{il} \mathbf{v}_{kl}
        + \sigma_\mathrm{n}^2 \sum_{l\in\D_k} \mu_{kl}^2 \lVert \mathbf{v}_{kl} \rVert^2 }.
    \label{eq:dist:SINRk}
\end{equation}
\end{figure*}

\begin{proposition}[Global SINR Approximation and Optimal Weights]
\label{prop:dist:globalSINRapprox_and_optimalweighting}
    Assuming $M \ge U$ and \gls{L-MMSE} combining, the global \gls{SINR} for \gls{UE} $k$ can be approximated by:
    \begin{equation}
        \widehat{\Gamma}_k^\mathrm{dist} = \left( \sum_{l\in\D_k} \mu_{kl}^2 (\Gamma_{kl}^\mathrm{dist})^{-1} \right)^{-1}.
    \label{eq:dist:SINRk_from_SINRkl}
    \end{equation}
    The weights that maximize \eqref{eq:dist:SINRk_from_SINRkl} subject to $\sum\limits_{l\in\D_k} \mu_{kl} = 1$ are:
    \begin{equation}
        \mu_{kl}
        =
        \begin{cases}
            \displaystyle 
            \frac{\Gamma_{kl}^\mathrm{dist}}{
            \sum_{l'\in\D_k} \Gamma_{kl'}^\mathrm{dist}}, & \text{if } l \in \D_k,\\
            0, & \text{otherwise},
        \end{cases}
    \label{eq:dist:weights:optimal}
    \end{equation}
    yielding the maximum approximate global \gls{SINR}
    \begin{equation}
        \widehat{\Gamma}_k^\mathrm{dist} = \sum_{l\in\D_k} \Gamma_{kl}^\mathrm{dist}.
    \label{eq:dist:SINRk_from_SINRkl_optweights}
    \end{equation}
\end{proposition}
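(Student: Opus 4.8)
The plan is to first obtain the approximation \eqref{eq:dist:SINRk_from_SINRkl} from the exact global \gls{SINR} in \eqref{eq:dist:SINRk}, and then solve the resulting weight optimization in closed form. For the approximation, I would start from the matrix inversion lemma applied to $\mathbf{W}_l=\mathbf{Z}_{kl}+p_k\widehat{\mathbf{h}}_{kl}\widehat{\mathbf{h}}_{kl}^\mathrm{H}$, which yields the effective \gls{L-MMSE} signal gain $\mathbf{v}_{kl}^\mathrm{H}\widehat{\mathbf{h}}_{kl}=p_k\widehat{\mathbf{h}}_{kl}^\mathrm{H}\mathbf{W}_l^{-1}\widehat{\mathbf{h}}_{kl}=\Gamma_{kl}^\mathrm{dist}/(1+\Gamma_{kl}^\mathrm{dist})$; combining this with the fact that the denominator of \eqref{eq:dist:SINRkl} equals $p_k(\mathbf{v}_{kl}^\mathrm{H}\widehat{\mathbf{h}}_{kl})^2/\Gamma_{kl}^\mathrm{dist}$ by definition, and with $\mathbf{v}_{kl}^\mathrm{H}\widehat{\mathbf{h}}_{kl}\approx 1$ for the strong subarrays retained by the selection rules (with $\Gamma_{kl}^\mathrm{dist}$ effectively deterministic by channel hardening when $M\ge U$), the local estimate \eqref{eq:dist:xkl_est} behaves as $\hat x_{kl}\approx x_k+n_{kl}$ with residual interference-plus-noise $n_{kl}$ of power $\approx p_k/\Gamma_{kl}^\mathrm{dist}$. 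Since the \gls{NLoS} channels, and therefore the estimates, the local combiners, and the residuals $n_{kl}$, are independent across subarrays, the aggregate \eqref{eq:dist:xk_est} becomes $\hat x_k=\sum_{l\in\D_k}\mu_{kl}\hat x_{kl}\approx x_k+\sum_{l\in\D_k}\mu_{kl}n_{kl}$ — the signal collapses to $x_k$ because $\sum_{l\in\D_k}\mu_{kl}=1$, while the residuals add incoherently with total power $\sum_{l\in\D_k}\mu_{kl}^2\,p_k/\Gamma_{kl}^\mathrm{dist}$ — and dividing signal power by residual power gives exactly \eqref{eq:dist:SINRk_from_SINRkl}. (Equivalently, expanding \eqref{eq:dist:SINRk} and discarding the inter-subarray cross terms $\mathbf{v}_{kl}^\mathrm{H}\widehat{\mathbf{h}}_{il}(\mathbf{v}_{kl'}^\mathrm{H}\widehat{\mathbf{h}}_{il'})^\ast$, $l\ne l'$, which average to zero by the same independence, leads to the same expression.)

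Given \eqref{eq:dist:SINRk_from_SINRkl}, the weight optimization is elementary since $\Gamma_{kl}^\mathrm{dist}$ does not depend on the $\mu_{kl}$: as $\widehat\Gamma_k^\mathrm{dist}$ is strictly decreasing in $\sum_{l\in\D_k}\mu_{kl}^2/\Gamma_{kl}^\mathrm{dist}$, maximizing it under $\sum_{l\in\D_k}\mu_{kl}=1$ is equivalent to minimizing the strictly convex quadratic $\sum_{l\in\D_k}\mu_{kl}^2/\Gamma_{kl}^\mathrm{dist}$ over that hyperplane. A Lagrange multiplier (or the \gls{KKT} conditions) gives the stationarity condition $2\mu_{kl}/\Gamma_{kl}^\mathrm{dist}=\lambda$, i.e.\ $\mu_{kl}\propto\Gamma_{kl}^\mathrm{dist}$, and the equality constraint fixes the constant, yielding \eqref{eq:dist:weights:optimal}; equivalently, Cauchy--Schwarz gives $1=\big(\sum_{l\in\D_k}\mu_{kl}\big)^2\le\big(\sum_{l\in\D_k}\mu_{kl}^2/\Gamma_{kl}^\mathrm{dist}\big)\big(\sum_{l\in\D_k}\Gamma_{kl}^\mathrm{dist}\big)$ with equality iff $\mu_{kl}\propto\Gamma_{kl}^\mathrm{dist}$. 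The minimizer is automatically positive (since $\Gamma_{kl}^\mathrm{dist}>0$), so the constraint $\mu_{kl}>0$ is inactive; substituting \eqref{eq:dist:weights:optimal} back into \eqref{eq:dist:SINRk_from_SINRkl} then gives $\widehat\Gamma_k^\mathrm{dist}=\sum_{l\in\D_k}\Gamma_{kl}^\mathrm{dist}$, which is \eqref{eq:dist:SINRk_from_SINRkl_optweights}.

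The main obstacle is the justification of the approximation in the first part; the optimization half is exact and routine. The reduction of \eqref{eq:dist:SINRk} to \eqref{eq:dist:SINRk_from_SINRkl} rests on two asymptotic claims — the vanishing of inter-subarray cross correlations (valid as $M$ grows with $M\ge U$, via independence across subarrays and channel hardening) and the near-unit effective signal gain $\mathbf{v}_{kl}^\mathrm{H}\widehat{\mathbf{h}}_{kl}=\Gamma_{kl}^\mathrm{dist}/(1+\Gamma_{kl}^\mathrm{dist})\approx 1$ (accurate only when the retained subarrays have high local \gls{SINR}). I would therefore keep these at the heuristic level and rely on the numerical validation to confirm that \eqref{eq:dist:SINRk_from_SINRkl}, and hence \eqref{eq:dist:SINRk_from_SINRkl_optweights}, track the true global \gls{SINR} closely; the remaining steps — the matrix inversion lemma identity and the convexity/feasibility of the quadratic program — are standard.
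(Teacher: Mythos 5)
Your proof is correct and reaches the same proposition at the same (heuristic) level of rigor as the paper; the optimization half is essentially identical to the paper's, which also neglects the nonnegativity constraints, forms the Lagrangian of $\min\sum_{l\in\D_k}\mu_{kl}^2\Gamma_{kl}^{-1}$ subject to $\sum_{l\in\D_k}\mu_{kl}=1$, and recovers $\mu_{kl}\propto\Gamma_{kl}^\mathrm{dist}$ (your Cauchy--Schwarz remark is a nice equivalent shortcut). The approximation half, however, is argued differently. The paper's route is to invoke the convergence of the \gls{L-MMSE} combiner to the local \gls{ZF} combiner when $M\ge U$, at high \gls{SNR} and with small estimation errors, so that $\mathbf{v}_{kl}^\mathrm{H}\widehat{\mathbf{h}}_{il}\approx\delta_{ik}$; substituting this into \emph{both} \eqref{eq:dist:SINRkl} and \eqref{eq:dist:SINRk} removes the coherent interference terms and unit-normalizes the signal gains simultaneously, after which $\Gamma_k^{-1}\approx\sum_{l\in\D_k}\mu_{kl}^2\Gamma_{kl}^{-1}$ is read off by inspection (the paper also notes, interestingly, that large $\mathbf{C}_{il}$ can make the neglected terms relatively harmless even when the ZF approximation is poor). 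You instead use the exact \gls{MMSE} identity $\mathbf{v}_{kl}^\mathrm{H}\widehat{\mathbf{h}}_{kl}=\Gamma_{kl}^\mathrm{dist}/(1+\Gamma_{kl}^\mathrm{dist})\approx1$ together with discarding only the inter-subarray cross terms in the denominator of \eqref{eq:dist:SINRk}; this keeps the same-subarray interference inside $\Gamma_{kl}^{-1}$ exactly, which is arguably a slightly tighter bookkeeping. One caution on your justification of incoherence: with Rician channels the quantities $\mathbf{v}_{kl}^\mathrm{H}\widehat{\mathbf{h}}_{il}$ are \emph{not} zero-mean (the estimates have means $\overline{\mathbf{h}}_{il}$), so "cross terms average to zero by independence" is not airtight; the cleaner argument is the paper's, namely that each factor $\mathbf{v}_{kl}^\mathrm{H}\widehat{\mathbf{h}}_{il}$, $i\neq k$, is itself small because \gls{L-MMSE} behaves like local \gls{ZF} when $M\ge U$ — a condition you do invoke, so this is a matter of emphasis rather than a gap, and the numerical validation you defer to is exactly what the paper does as well.
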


\begin{proof}
    See Appendix \ref{app:proof:globalSINRapprox_and_optimalweighting}.
\end{proof}

We also refer to the optimal weighting in \eqref{eq:dist:weights:optimal} as \gls{SINR}-based weighting. A practical alternative is \gls{LSF}-based weighting, which avoids local \gls{SINR} estimation by using weights proportional to the \gls{LSF} gains:
\begin{equation}
    \mu_{kl}
    =
    \begin{cases}
         \displaystyle 
         \frac{\beta_{kl}}{
         \sum_{l'\in\D_k} \beta_{kl'}}, & \text{if } l \in \D_k,\\
        0, & \text{otherwise}.
    \end{cases}
 \label{eq:dist:weights:PCG}
\end{equation}

While \eqref{eq:dist:weights:PCG} is suboptimal as it neglects interference and channel estimation errors, it generally outperforms the equal weighting strategy, defined as:
\begin{equation}
    \mu_{kl}
    =
    \begin{cases}
        \displaystyle 1/L_k, & \text{if } l \in \D_k,\\
        0, & \text{otherwise}.
    \end{cases}
\label{eq:dist:weights:equal}
\end{equation}

The approximate \gls{SE} can be expressed via the global mapping $\hat{\eta}_k^\mathrm{dist} = g(\widehat{\Gamma}_k^\mathrm{dist})$ or, equivalently, via the component-wise mapping $\hat{\eta}_k^\mathrm{dist} = \hat{g}\big(\{ \Gamma_{kl}^\mathrm{dist} \}_{l\in\D_k}\big)$, defined as:
\begin{equation}
    \hat{g}\left(\{\Gamma_{kl}\}_{l\in\D_k}\right) = \left(1 - \frac{\tau_\mathrm{p}}{\tau_\mathrm{c}}\right) \log_2 \left(1 + \frac{1}{\sum_{l\in\D_k} \mu_{kl}^2 \Gamma_{kl}^{-1}} \right).
\label{eq:g^}
\end{equation}

\subsection{Performance Comparison
}
\label{subsec:system_model:results}

This subsection evaluates the uplink performance of centralized and distributed \gls{XL-MIMO} architectures as a function of the number of serving subarrays, for different subarray selection and weighting strategies.

The \gls{BS} comprises $L=16$ linearly arranged subarrays, each equipped with a $4 \times 4$ \gls{UPA} ($M=16$), serving $U=16$ \glspl{UE} in a $100 \times 50$\,m coverage area
. The coherence block length is $\tau_\mathrm{c} = 16$ and $\tau_\mathrm{p} = 8$ orthogonal pilots are randomly assigned to the \glspl{UE}, which implies pilot contamination since $U > \tau_\mathrm{p}$. Table \ref{tab:parameters} summarizes the remaining parameter values adopted in this validation.

\begin{table}[!ht]
\caption{Simulation Parameters}
\label{tab:parameters}
\centering
\begin{tabular}{l c}
\toprule
\textbf{Parameter} & \textbf{Value} \\ \midrule
Number of subarrays ($L$) & 16\\
UE transmit power ($p_k$) & 20 dBm\\
Noise power ($\sigma_\mathrm{n}^2$) & $-96$ dBm\\
Channel coherence block length ($\tau_\mathrm{c}$) & 16 \\
Carrier frequency & 6 GHz \\
Antenna spacing (half-wavelength) & 2.5 cm\\
Subarray spacing & 6.0 m \\
Array height & 3 m \\
UE height & 1.5 m\\
\bottomrule
\end{tabular}
\end{table}
 


Figure \ref{fig:mean_SE_distributed} illustrates the impact of subarray selection and weighting on the mean \gls{SE} in distributed operation. Two distinct convergence regimes are observed at the boundaries of the $L_k$ range. At $L_k=1$, performance is dictated solely by the \textit{selection} strategy, clustering the curves into three groups; weighting is irrelevant here, as the final signal estimate coincides with the local estimate from the selected subarray. Conversely, at $L_k=L=16$, the selection strategy is irrelevant, and the curves cluster into three groups determined exclusively by the \textit{weighting} scheme.

\begin{figure}[!ht]
\centering
\subfloat[Distributed operation: comparing subarray selection and weighting strategies.]{\includegraphics[trim={0mm 0mm 0mm 0mm},clip,width=0.75\linewidth]{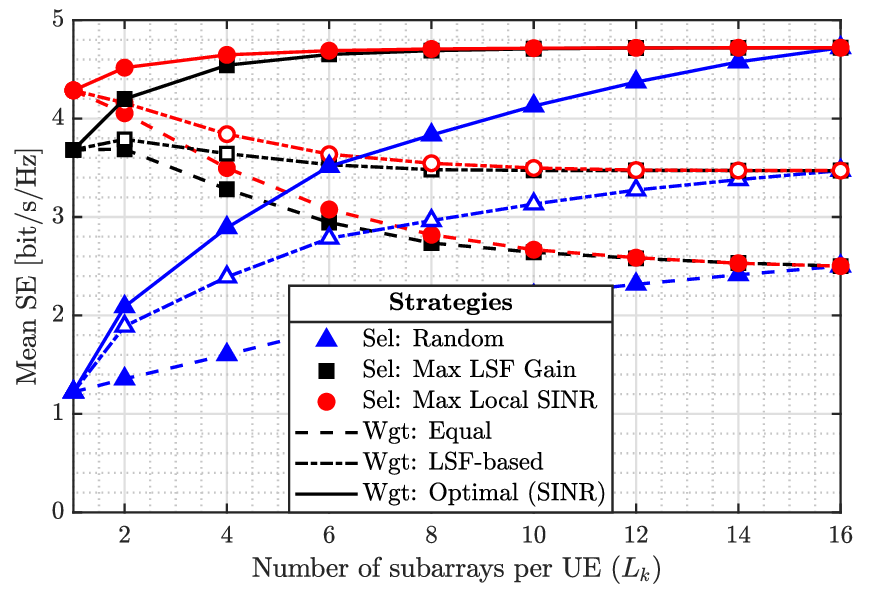}\label{fig:mean_SE_distributed}}
\\
\subfloat[Architecture comparison: centralized vs. distributed with optimal weighting, varying subarray selection strategy.]{\includegraphics[trim={0mm 0mm 0mm 0mm},clip,width=0.75\linewidth]{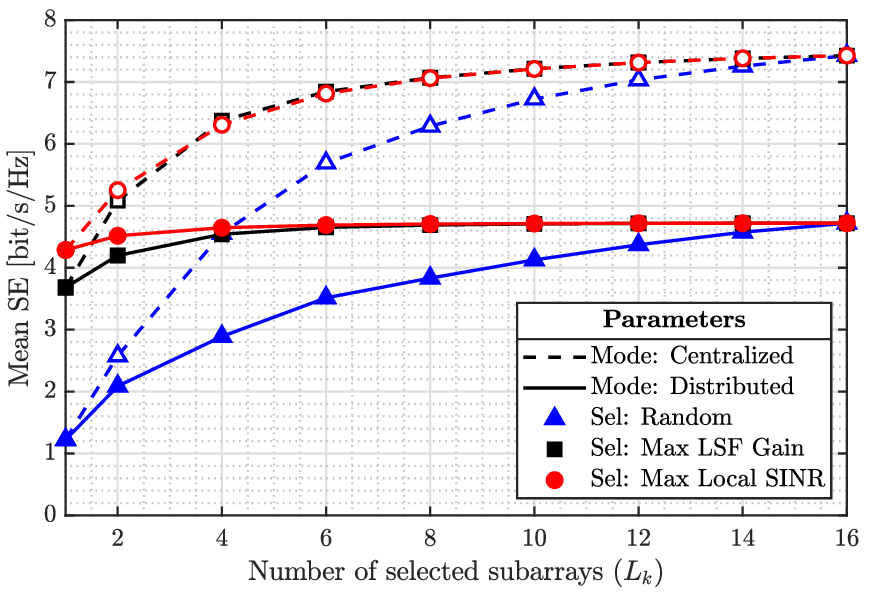}\label{fig:mean_SE}}
\caption{Mean SE versus number of selected subarrays $L_k$.
}
\label{fig:UL_data_estimation}
\end{figure}

For small numbers of selected subarrays (e.g., $L_k = 2$), intelligent selection (\gls{LSF}- and \gls{SINR}-based) significantly outperforms random selection. The \gls{SINR}-based scheme yields the highest performance, confirming that interference awareness is especially critical when $L_k$ is low. As $L_k$ increases, the \gls{LSF}-based selection converges to the \gls{SINR}-based scheme in performance.

The optimal (\gls{SINR}-based) weighting provides a clear upper bound, increasing monotonically with $L_k$ up to about 4.7 bit/s/Hz. 
In contrast, \textit{equal} weighting degrades performance for intelligent selection schemes as $L_k$ grows, since it does not attenuate the contribution of the subarrays with weak channels or strong interference. Consequently, increasing $L_k$ in distributed operation is only beneficial if optimal weights are applied
. The \gls{LSF}-based weighting offers an intermediate solution, alleviating performance loss but still falling short of the optimum as it fails to penalize interference.

Figure \ref{fig:mean_SE} compares the distributed architecture (with optimal weighting) against the fully centralized implementation. As expected, centralized processing consistently outperforms the distributed approach due to its ability to jointly suppress inter-user interference.
The \textit{SINR-based} selection yields the highest performance for both architectures. 
The \textit{LSF-based} scheme closely follows the SINR-based trend, reinforcing that LSF is a robust selection metric. 

The results highlight the trade-off between complexity and \gls{SE}. While centralized processing yields superior \gls{SE}, the distributed architecture offers a viable low-complexity alternative. To maximize its efficacy, two design choices are critical: \textit{a)} \gls{LSF}-based subarray selection is recommended for achieving near-optimal performance without requiring local \gls{SINR} estimation; and \textit{b)} optimal weighting is mandatory to prevent performance degradation when scaling $L_k$.

Finally, results demonstrated that selecting only a small subset of subarrays to serve each \gls{UE} is sufficient to achieve high mean \gls{SE} while significantly reducing computational complexity and fronthaul load compared to utilizing the full antenna array ($L_k=L$).

\section{Uplink SINR Approximations}
\label{sec:SINR}

This section derives deterministic asymptotic and ergodic \gls{SINR} approximations for centralized and distributed uplink operation
. The asymptotic expressions describe the deterministic convergence of the instantaneous \gls{SINR} as the number of antennas grows large. Conversely, the ergodic expressions approximate the average \gls{SINR} and are particularly accurate in crowded scenarios with high user loads.

\begin{theorem}[Ergodic SINR]
\label{theo:SINR:ergodic}
    The ergodic global uplink SINR for centralized operation with MMSE combiner and the ergodic local uplink SINR for distributed operation with L-MMSE combiner are:
    \begin{subequations}
        \begin{align}
            \Gamma_k^\mathrm{cent,erg} &\approx p_k \tr\left[ \mathbf{D}_k \overline{\mathbf{Z}}_k^{-1} \mathbf{D}_k \left( \mathbf{Q}_k - \mathbf{C}_k \right) \mathbf{D}_k \right],
        \label{eq:cent:SINRk:erg} \\
            \Gamma_{kl}^\mathrm{dist,erg} &\approx p_k \tr\left[ \overline{\mathbf{Z}}_{kl}^{-1} \left( \mathbf{Q}_{kl} - \mathbf{C}_{kl} \right) \right],
        \label{eq:dist:SINRkl:erg}            
        \end{align}
    \end{subequations}
    where
    \begin{subequations}
        \begin{align}
            \overline{\mathbf{Z}}_k &= \mathbf{D}_k \left( \sum_{i \in \U \setminus \{k\}} p_i \mathbf{Q}_i + p_k \mathbf{C}_k\right) \mathbf{D}_k + \sigma_\mathrm{n}^2 \eye_{ML},
        \label{eq:cent:Zk_expectation} \\
            \overline{\mathbf{Z}}_{kl} &= \sum_{i \in \U \setminus \{k\}} p_i \mathbf{Q}_{il} + p_{k} \mathbf{C}_{kl} + \sigma_\mathrm{n}^{2} \eye_{M}.
        \label{eq:dist:Zkl_expectation}
        \end{align}
    \end{subequations}
    These approximations hold if the selected subarrays yield strong \gls{LoS} components or under high user load conditions (i.e., $U \gg ML_k$ for centralized or $U \gg M$ for distributed operation).
\end{theorem}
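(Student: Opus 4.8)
The plan is to derive both identities by the same two-stage deterministic-equivalent argument, starting from the closed-form maximum-SINR expressions \eqref{eq:cent:SINRk_MMSE} and \eqref{eq:dist:SINRkl_LMMSE}. First I would simplify the interference-plus-noise matrices. Substituting $\mathbf{W}$ into $\mathbf{Z}_k = \mathbf{W} - p_k\mathbf{D}_k\widehat{\mathbf{h}}_k\widehat{\mathbf{h}}_k^\mathrm{H}\mathbf{D}_k$ and cancelling the $i=k$ rank-one term shows that
\begin{equation*}
    \mathbf{Z}_k = \sum_{i\in\U\setminus\{k\}} p_i\,\mathbf{D}_k\big(\widehat{\mathbf{h}}_i\widehat{\mathbf{h}}_i^\mathrm{H} + \mathbf{C}_i\big)\mathbf{D}_k + p_k\mathbf{D}_k\mathbf{C}_k\mathbf{D}_k + \sigma_\mathrm{n}^2\eye_{ML},
\end{equation*}
so $\mathbf{Z}_k$ carries no explicit dependence on $\widehat{\mathbf{h}}_k$ and is coupled to it only through the estimates of the co-pilot users in $\mathcal{P}_{t_k}$; the analogous reduction applies to $\mathbf{Z}_{kl}$ in \eqref{eq:dist:Zkl}.

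Second, I would pass to the mean of these matrices. From the estimate distribution \eqref{eq:hkl_est_distribution} one has $\E\{\widehat{\mathbf{h}}_i\widehat{\mathbf{h}}_i^\mathrm{H}\} = \overline{\mathbf{h}}_i\overline{\mathbf{h}}_i^\mathrm{H} + \mathbf{R}_i - \mathbf{C}_i = \mathbf{Q}_i - \mathbf{C}_i$; inserting this in the reduced forms and merging the $\mathbf{C}_i$ terms yields precisely $\overline{\mathbf{Z}}_k = \E\{\mathbf{Z}_k\}$ as in \eqref{eq:cent:Zk_expectation} and $\overline{\mathbf{Z}}_{kl} = \E\{\mathbf{Z}_{kl}\}$ as in \eqref{eq:dist:Zkl_expectation}. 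I would then make the central approximation $\E\{\Gamma_k^\mathrm{cent}\} = p_k\,\E\{\widehat{\mathbf{h}}_k^\mathrm{H}\mathbf{D}_k\mathbf{Z}_k^{-1}\mathbf{D}_k\widehat{\mathbf{h}}_k\} \approx p_k\,\E\{\widehat{\mathbf{h}}_k^\mathrm{H}\mathbf{D}_k\overline{\mathbf{Z}}_k^{-1}\mathbf{D}_k\widehat{\mathbf{h}}_k\}$, i.e.\ replace $\mathbf{Z}_k$ by its mean inside the inverse and treat $\widehat{\mathbf{h}}_k$ as independent of the resulting deterministic matrix. Since $\overline{\mathbf{Z}}_k^{-1}$ is then deterministic, $\E\{\mathbf{D}_k\widehat{\mathbf{h}}_k\widehat{\mathbf{h}}_k^\mathrm{H}\mathbf{D}_k\} = \mathbf{D}_k(\mathbf{Q}_k-\mathbf{C}_k)\mathbf{D}_k$ and the trace identity $\E\{\mathbf{a}^\mathrm{H}\mathbf{M}\mathbf{a}\} = \tr(\mathbf{M}\,\E\{\mathbf{a}\mathbf{a}^\mathrm{H}\})$ give $p_k\tr\!\big(\overline{\mathbf{Z}}_k^{-1}\mathbf{D}_k(\mathbf{Q}_k-\mathbf{C}_k)\mathbf{D}_k\big)$, which equals \eqref{eq:cent:SINRk:erg} after using $\mathbf{D}_k^2=\mathbf{D}_k$ and cyclicity of the trace. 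The distributed identity \eqref{eq:dist:SINRkl:erg} follows by applying the identical argument to \eqref{eq:dist:SINRkl_LMMSE} on each serving subarray, with $\mathbf{D}_k$ absent.

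The main obstacle is justifying the replacement $\E\{\mathbf{Z}_k^{-1}\} \approx (\E\{\mathbf{Z}_k\})^{-1}$ together with the decoupling of $\widehat{\mathbf{h}}_k$ from $\mathbf{Z}_k$, since the matrix inverse is a nonlinear, and in general correlated, function of the interferers' estimates. I would argue this in the two regimes stated in the theorem. When the selected subarrays see strong \gls{LoS}, each $\widehat{\mathbf{h}}_i$ is dominated by its deterministic mean $\overline{\mathbf{h}}_i$, so $\mathbf{Z}_k$ differs from $\overline{\mathbf{Z}}_k$ only by fluctuations of vanishing relative size and is essentially independent of $\widehat{\mathbf{h}}_k$. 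When the user load is high ($U\gg ML_k$ for the centralized case, $U\gg M$ per subarray for the distributed one), $\mathbf{Z}_k$ is a sum of many independent positive semidefinite terms, so a law-of-large-numbers / trace-concentration argument shows that $\mathbf{Z}_k$ concentrates around $\overline{\mathbf{Z}}_k$ in the relevant spectral sense, while the residual correlation with $\widehat{\mathbf{h}}_k$ is confined to the low-dimensional subspace associated with the co-pilot users $\mathcal{P}_{t_k}$ and contributes only a lower-order correction to the quadratic form. Combining either regime with the bounded moments of $\widehat{\mathbf{h}}_k$ legitimizes exchanging the expectation with the inverse at first order; a fully rigorous treatment would invoke deterministic-equivalent results from random matrix theory, but here the statement is presented as a first-order approximation whose accuracy is confirmed numerically in Section~\ref{subsec:system_model:results} and later. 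For completeness I would also remark that the decoupling is exact whenever UE $k$ shares its pilot with no co-scheduled user on the selected subarrays, in which case $\E\{\mathbf{Z}_k^{-1}\}\approx(\E\{\mathbf{Z}_k\})^{-1}$ is the only approximation that remains.
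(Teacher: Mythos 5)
Your proposal is correct and follows essentially the same route as the paper's proof: decouple $\widehat{\mathbf{h}}_k$ from $\mathbf{Z}_k$, replace $\E\{\mathbf{Z}_k^{-1}\}$ by $(\E\{\mathbf{Z}_k\})^{-1}$ via a small-fluctuation/concentration argument (the paper formalizes this with a Neumann-series lemma), and then use $\E\{\widehat{\mathbf{h}}_k\widehat{\mathbf{h}}_k^\mathrm{H}\}=\mathbf{Q}_k-\mathbf{C}_k$ together with the trace identity. Your justification of the validity regimes (strong LoS or high user load, with residual co-pilot coupling as a lower-order effect) matches the paper's reasoning, so no gap remains.
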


\begin{proof}
    See Appendix \ref{app:proof:SINR_ergo_dist}.
\end{proof}

\begin{corollary}
\label{cor:complexity:ergodic}
    The computational complexity (in real multiplications) to evaluate 
    \eqref{eq:cent:SINRk:erg} and 
    \eqref{eq:dist:SINRkl:erg} 
    across the $L_k$ serving subarrays is, respectively:
    \begin{subequations}
        \begin{align}
            \mathcal{C}_k^\mathrm{cent,erg} &= 3M^3L_k^3 + \frac{1}{2}M^2L_k^2 - \frac{3}{2}ML_k,
        \label{eq:cent:SINRk:erg:complexity} \\
            \mathcal{C}_k^\mathrm{dist,erg} &= 3M^3L_k + \frac{1}{2}M^2L_k - \frac{3}{2}ML_k.
        \label{eq:dist:SINRkl:erg:complexity}
        \end{align}
    \end{subequations}
\end{corollary}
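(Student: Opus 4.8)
The plan is to count real multiplications for one concrete implementation of \eqref{eq:cent:SINRk:erg} and \eqref{eq:dist:SINRkl:erg}, treating the channel statistics $\mathbf{Q}_{il}$, $\mathbf{C}_{il}$ (and hence the assembled matrices $\overline{\mathbf{Z}}_k$, $\overline{\mathbf{Z}}_{kl}$, $\mathbf{Q}_k-\mathbf{C}_k$) as precomputed quantities, since otherwise forming $\overline{\mathbf{Z}}_k$ would contribute an $O(UB^2)$ term absent from the stated formulas. First I would reduce both expressions to a single canonical primitive: evaluating $p\,\tr[\mathbf{Z}^{-1}\mathbf{A}]$ for a Hermitian positive-definite $\mathbf{Z}\in\C^{B\times B}$ and a Hermitian $\mathbf{A}\in\C^{B\times B}$. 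For the distributed expression this is immediate with $B=M$, $\mathbf{Z}=\overline{\mathbf{Z}}_{kl}$, $\mathbf{A}=\mathbf{Q}_{kl}-\mathbf{C}_{kl}$, and the overall cost is this per-subarray cost times the $L_k$ serving subarrays. For the centralized expression I would first exploit the structure of $\mathbf{D}_k$: because $\overline{\mathbf{Z}}_k=\mathbf{D}_k(\cdots)\mathbf{D}_k+\sigma_\mathrm{n}^2\eye_{ML}$, in the selected/unselected block partition it is block diagonal, $\operatorname{blkdiag}\!\big((\overline{\mathbf{Z}}_k)_{SS},\,\sigma_\mathrm{n}^2\eye_{M(L-L_k)}\big)$, so $\mathbf{D}_k\overline{\mathbf{Z}}_k^{-1}\mathbf{D}_k=\operatorname{blkdiag}\!\big((\overline{\mathbf{Z}}_k)_{SS}^{-1},\,\mathbf{0}\big)$; using $\mathbf{D}_k^2=\mathbf{D}_k$ and cyclicity of the trace, \eqref{eq:cent:SINRk:erg} collapses to $p_k\tr[(\overline{\mathbf{Z}}_k)_{SS}^{-1}(\mathbf{Q}_k-\mathbf{C}_k)_{SS}]$, i.e.\ the canonical primitive with $B=ML_k$.

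It then remains to count multiplications for $p\,\tr[\mathbf{Z}^{-1}\mathbf{A}]$ with $\mathbf{Z}$ a $B\times B$ Hermitian positive-definite matrix. I would split the work into three stages: (i) a Hermitian (Cholesky-type) factorization $\mathbf{Z}=\mathbf{L}\herm{\mathbf{L}}$ followed by the triangular solves producing the entries of $\mathbf{Z}^{-1}\mathbf{A}$ that are needed — this contributes the cubic term, and exploiting Hermitian symmetry of $\mathbf{Z},\mathbf{A}$ together with the standard fact that one complex multiplication costs three real multiplications fixes its coefficient at $3$; (ii) assembling $\tr[\mathbf{Z}^{-1}\mathbf{A}]=\sum_i(\mathbf{Z}^{-1}\mathbf{A})_{ii}$ using only the required entries and again Hermitian symmetry, which yields the quadratic term with coefficient $\tfrac12$; and (iii) accounting for the remaining real scalar multiplications — the normalizations internal to the factorization and the final scaling by $p_k$ — which produce the linear term. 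Summing the three stages gives $f(B)=3B^3+\tfrac12 B^2-\tfrac32 B$; substituting $B=ML_k$ yields \eqref{eq:cent:SINRk:erg:complexity}, and $L_k\,f(M)$ yields \eqref{eq:dist:SINRkl:erg:complexity}.

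The leading $B^3$ term is routine once the complex-to-real conversion and the cost of a Hermitian linear solve are fixed; the genuine obstacle is the \emph{exact} lower-order coefficients $\tfrac12 B^2$ and, especially, $-\tfrac32 B$. A negative linear term means the count cannot be done at the $O(\cdot)$ level but must be carried through the precise recursion of the chosen factorization (exact summations of the form $\sum_j(j-1)$, $\sum_j(B-j)$, etc.), and it is sensitive to modelling choices: which operations count as multiplications (divisions versus precomputed reciprocals, real-by-complex scalings) and how aggressively Hermitian symmetry is used to halve the triangular-solve and trace-assembly work. I would therefore commit to one explicit algorithmic variant at the outset, tabulate every multiplication exactly through factorization, solve, and trace assembly, and check that the sub-cubic remainder collapses exactly to $\tfrac12 B^2-\tfrac32 B$; the two claimed complexity expressions then follow from the reductions $B\mapsto ML_k$ (centralized, single block) and $B\mapsto M$ with an outer factor $L_k$ (distributed) established in the first paragraph.
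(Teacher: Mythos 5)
Your reduction is the same one the paper uses: both formulas collapse to evaluating $p_k\tr(\mathbf{Z}^{-1}\mathbf{A})$ for a Hermitian positive-definite $\mathbf{Z}$ of size $B=ML_k$ (centralized, one evaluation) or $B=M$ (distributed, repeated $L_k$ times), followed by a triangular factorization, solves that produce only the entries needed for the trace, and the three-real-multiplications-per-complex-multiplication conversion. Your remark that $\overline{\mathbf{Z}}_k$ is block diagonal with respect to the selected/unselected partition, so that \eqref{eq:cent:SINRk:erg} reduces to a single $ML_k\times ML_k$ instance of the primitive, is a correct and slightly more explicit justification of the paper's ``replace $M$ by $ML_k$'' step.

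The gap is that the corollary \emph{is} an exact operation count, and you never perform it. You state the target polynomial $3B^3+\tfrac12B^2-\tfrac32B$ and defer the tabulation (``I would \ldots tabulate every multiplication exactly and check''), but that tabulation is the entire content of the proof; nothing in the proposal derives the coefficients $\tfrac12$ and $-\tfrac32$. Moreover, the stage-wise attribution you give would not produce them. In the paper's count, $\mathbf{Z}$ is factored as $\mathbf{L}\mathbf{D}\mathbf{L}^{\mathrm{H}}$ (no square roots), costing $\tfrac13(B^3-B)$ complex multiplications; the forward solves $\mathbf{L}\mathbf{z}_m=\mathbf{b}_m$ cost $\tfrac12(B^3-B^2)$ complex multiplications; the diagonal solves cost $2B^2$ real divisions; and a back-substitution truncated at row $m$ (only $[\mathbf{c}_m]_m$ is needed) costs $\tfrac16(B^3-B)$ complex multiplications. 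Converting and summing gives $3B^3+\tfrac12B^2-\tfrac32B$: the $+\tfrac12B^2$ is the net of $+2B^2$ (diagonal divisions) and $-\tfrac32B^2$ (forward substitution), and the $-\tfrac32B$ comes from the $-B$ terms of the factorization and of the truncated back-substitution. It does not come from ``trace assembly'' (summing already-computed diagonal entries needs no multiplications) nor from the final scaling by $p_k$ (an $O(1)$ cost that the stated formulas do not even include). Finally, committing to a Cholesky factorization $\mathbf{Z}=\mathbf{L}\mathbf{L}^{\mathrm{H}}$, as you propose, introduces $B$ square roots and a different multiplication/division tally, so under the paper's conventions it would not reproduce these exact constants; the stated polynomial corresponds to the $\mathbf{L}\mathbf{D}\mathbf{L}^{\mathrm{H}}$ variant. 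As written, the proposal identifies the right reduction but does not prove the claimed complexity expressions.
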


\begin{proof}
    See Appendix \ref{app:dist:erglocalSINR:complexity}.
\end{proof}

\begin{theorem}[Asymptotic SINR]
\label{theo:SINR:asymptotic}
    Assume MMSE channel estimation and user loads satisfying $U \le ML_k$ for centralized or $U \le M$ for distributed operation. Furthermore, assume MMSE (centralized) or L-MMSE (distributed) combining is used, and that \textit{a)} there is no pilot reuse or \textit{b)} the channel estimates are relatively accurate and the \gls{NLoS} channel component is spatially uncorrelated.
    As the number of antennas grows large ($ML_k\to\infty$ for centralized or $M\to\infty$ for distributed operation), the instantaneous \gls{SINR} converges deterministically to the asymptotic limits given by:
    \begin{subequations}
        \begin{align}
            \Gamma_k^\mathrm{cent,asy} 
            &= \frac{\displaystyle \tr(\mathbf{D}_k \mathbf{X}_k \mathbf{D}_k) - \sum_{\substack{i\in\U \setminus \{k\}}} \frac{\tr(\mathbf{D}_k \mathbf{X}_i \mathbf{D}_k \mathbf{X}_k \mathbf{D}_k)}{\tr(\mathbf{D}_k \mathbf{X}_i \mathbf{D}_k)}}{\displaystyle \sum_{i\in\U} \frac{p_i}{ML_kp_k} \tr(\mathbf{D}_k \mathbf{C}_i \mathbf{D}_k) + \sigma_\mathrm{n}^2 },
        \label{eq:cent:SINRk:asy} \\
            \Gamma_{kl}^\mathrm{dist,asy} &= \frac{\displaystyle \tr\left( \mathbf{X}_{kl} \right) - \sum_{i\in\U \setminus \{k\}} \frac{\tr(\mathbf{X}_{kl} \mathbf{X}_{il})}{\tr(\mathbf{X}_{il})} }{\displaystyle \sum_{i\in\U} \frac{p_i}{Mp_k} \tr(\mathbf{C}_{il}) + \sigma_\mathrm{n}^2},
        \label{eq:dist:SINRkl:asy}
        \end{align}
    \end{subequations}
    where $\mathbf{X}_k = \mathbf{Q}_k - \mathbf{C}_k$ and $\mathbf{X}_{kl} = \mathbf{Q}_{kl} - \mathbf{C}_{kl}$.
\end{theorem}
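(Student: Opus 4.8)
The plan is to start from the exact MMSE expressions \eqref{eq:cent:SINRk_MMSE} and \eqref{eq:dist:SINRkl_LMMSE} and push the number of antennas to infinity using standard random-matrix / deterministic-equivalent tools, in the spirit of the analyses cited in the related work (e.g., \cite{ergodicSINRproof}). I treat the distributed case \eqref{eq:dist:SINRkl:asy} first since it is the building block; the centralized case then follows by the same argument with $M\to ML_k$ and the block-diagonal selection matrix $\mathbf{D}_k$ carried through. Write $\Gamma_{kl}^\mathrm{dist} = p_k \widehat{\mathbf{h}}_{kl}^\mathrm{H} \mathbf{Z}_{kl}^{-1} \widehat{\mathbf{h}}_{kl}$ with $\mathbf{Z}_{kl}$ as in \eqref{eq:dist:Zkl}. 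Using the distribution \eqref{eq:hkl_est_distribution}, decompose $\widehat{\mathbf{h}}_{kl} = \overline{\mathbf{h}}_{kl} + \check{\mathbf{h}}_{kl}^{\mathrm{est}}$ where $\check{\mathbf{h}}_{kl}^{\mathrm{est}} \sim \mathcal{N}_\C(\mathbf{0}, \mathbf{R}_{kl} - \mathbf{C}_{kl})$, so that $\E\{\widehat{\mathbf{h}}_{kl}\widehat{\mathbf{h}}_{kl}^\mathrm{H}\} = \mathbf{Q}_{kl} - \mathbf{C}_{kl} = \mathbf{X}_{kl}$, explaining the appearance of $\mathbf{X}_{kl}$ throughout.

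The key steps, in order, are: (i) Rewrite $\mathbf{Z}_{kl}^{-1}$ via the matrix inversion lemma to isolate the rank-one terms $p_i\widehat{\mathbf{h}}_{il}\widehat{\mathbf{h}}_{il}^\mathrm{H}$ for $i\neq k$, obtaining a sum over interferers of scalar ratios of the form $\widehat{\mathbf{h}}_{il}^\mathrm{H}\mathbf{A}\widehat{\mathbf{h}}_{il} / (1 + p_i\widehat{\mathbf{h}}_{il}^\mathrm{H}\mathbf{A}\widehat{\mathbf{h}}_{il})$. (ii) Invoke a trace lemma / concentration of quadratic forms: for a vector with the above Gaussian-plus-deterministic structure and a deterministic (or asymptotically deterministic, independent) matrix $\mathbf{A}$ with bounded spectral norm, $\tfrac{1}{M}\widehat{\mathbf{h}}_{il}^\mathrm{H}\mathbf{A}\widehat{\mathbf{h}}_{il} - \tfrac{1}{M}\tr(\mathbf{A}\mathbf{X}_{il}) \to 0$ almost surely as $M\to\infty$. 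This is where hypotheses \textit{a)} (no pilot reuse, so $\widehat{\mathbf{h}}_{il}$ and $\widehat{\mathbf{h}}_{kl}$ are independent) or \textit{b)} (accurate estimates plus spatially uncorrelated NLoS, so the dependent/LoS parts are negligible or scalar-covariance) enter to guarantee the needed independence/regularity; the load condition $U\le M$ ensures the noise-plus-error "core" of $\mathbf{Z}_{kl}$, namely $p_k\mathbf{C}_{kl} + \sum_i p_i\mathbf{C}_{il} + \sigma_\mathrm{n}^2\eye_M$ (the $\mathbf{C}$ terms do not blow up), keeps $\mathbf{Z}_{kl}$ invertible with controlled norm in the limit. (iii) Collect the scaled traces: the signal numerator $p_k\widehat{\mathbf{h}}_{kl}^\mathrm{H}(\cdots)^{-1}\widehat{\mathbf{h}}_{kl}$ yields $\tr(\mathbf{X}_{kl})$ in the numerator, each interferer $i$ contributes the subtractive term $\tr(\mathbf{X}_{kl}\mathbf{X}_{il})/\tr(\mathbf{X}_{il})$, and the estimation-error-plus-noise floor contributes $\sum_i \tfrac{p_i}{Mp_k}\tr(\mathbf{C}_{il}) + \sigma_\mathrm{n}^2$ in the denominator, matching \eqref{eq:dist:SINRkl:asy}. (iv) Repeat verbatim for the centralized SINR \eqref{eq:cent:SINRk_MMSE}, carrying $\mathbf{D}_k$ and the dimension $ML_k$, to obtain \eqref{eq:cent:SINRk:asy}.

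I expect the main obstacle to be step (ii): making the quadratic-form concentration rigorous when the matrix $\mathbf{A}$ (built from $\mathbf{Z}_{kl}^{-1}$ or its rank-one perturbations) is \emph{not} independent of the vector $\widehat{\mathbf{h}}_{il}$ appearing in it. The standard remedy is a leave-one-out (rank-one perturbation) argument: replace $\mathbf{Z}_{kl}^{-1}$ by $\mathbf{Z}_{kl,i}^{-1}$ with the $i$-th term removed, show the difference is $O(1/M)$ in the relevant quadratic form via the resolvent identity, and only then apply the trace lemma with the now-independent matrix. A secondary delicacy is justifying the replacement of $\widehat{\mathbf{h}}_{il}\widehat{\mathbf{h}}_{il}^\mathrm{H}$ by its mean $\mathbf{X}_{il}$ inside $\mathbf{Z}_{kl}$ \emph{uniformly} — this is precisely what assumptions \textit{a)} and \textit{b)} are tailored to make legitimate, and the proof will need to treat the two cases separately at exactly this point (under \textit{a)} independence does the work; under \textit{b)} one argues that $\mathbf{R}_{kl}-\mathbf{C}_{kl}$ is small and $\mathbf{C}_{kl}\approx\check\beta_{kl}\eye_M$, so the contaminating cross terms vanish or become scalar). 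The remaining algebra — matrix inversion lemma bookkeeping and assembling the final ratio — is routine once these two points are secured.
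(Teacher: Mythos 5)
There is a genuine gap: the route you propose would not land on the stated closed form. Working directly on the MMSE resolvent $\Gamma_{kl}^\mathrm{dist}=p_k\widehat{\mathbf{h}}_{kl}^\mathrm{H}\mathbf{Z}_{kl}^{-1}\widehat{\mathbf{h}}_{kl}$ with the matrix inversion lemma, a leave-one-out argument and quadratic-form concentration is the standard deterministic-equivalent machinery of \cite{ergodicSINRproof}, and carried out honestly it produces coupled fixed-point (Stieltjes-transform) equations of the type the paper explicitly sets out to avoid (cf.\ its criticism of \cite{Kaur2023}): the per-interferer terms come with $\bigl(1+p_i\widehat{\mathbf{h}}_{il}^\mathrm{H}\mathbf{Z}_{kl,i}^{-1}\widehat{\mathbf{h}}_{il}\bigr)$ corrections and traces against the (implicitly defined) core resolvent, not against $\tr(\mathbf{X}_{il})$. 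Your step (iii) simply asserts that each interferer contributes $\tr(\mathbf{X}_{kl}\mathbf{X}_{il})/\tr(\mathbf{X}_{il})$ and that the denominator becomes $\sum_i\frac{p_i}{Mp_k}\tr(\mathbf{C}_{il})+\sigma_\mathrm{n}^2$, but nothing in steps (i)--(ii) delivers that structure; the assertion is exactly the nontrivial part. Moreover, the "secondary delicacy" you mention — replacing $\widehat{\mathbf{h}}_{il}\widehat{\mathbf{h}}_{il}^\mathrm{H}$ by its mean inside $\mathbf{Z}_{kl}$ — is the derivation of the \emph{ergodic} approximation \eqref{eq:dist:SINRkl:erg} in Theorem~\ref{theo:SINR:ergodic}, not of the asymptotic limit \eqref{eq:dist:SINRkl:asy}; pursuing it would give a different (and here wrong) formula.

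The paper's proof uses two structural steps that your sketch lacks and that are precisely what generate the closed form. First, the L-MMSE combiner is replaced by the local ZF combiner ($\mathbf{v}_{kl}^\mathrm{H}\widehat{\mathbf{h}}_{il}\approx\delta_{ik}$, valid under accurate estimates/high SNR), which together with the scalar error covariance $\mathbf{C}_{il}\approx\frac{1}{M}\tr(\mathbf{C}_{il})\mathbf{I}_M$ (assumption \textit{b}, or eigenvalue clustering under \textit{a}) collapses the SINR to $p_k\big/\big[(\sum_i\frac{p_i}{M}\tr(\mathbf{C}_{il})+\sigma_\mathrm{n}^2)\lVert\mathbf{v}_{kl}\rVert^2\big]$, and $\lVert\mathbf{v}_{kl}\rVert^2=[\widehat{\mathbf{h}}_{kl}^\mathrm{H}(\mathbf{I}_M-\mathbf{P}_{kl})\widehat{\mathbf{h}}_{kl}]^{-1}$ by block inversion of the ZF Gram matrix. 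Second — and this is the key missing idea — the interferers' Gram matrix $\ddot{\mathbf{H}}_{kl}^\mathrm{H}\ddot{\mathbf{H}}_{kl}$ is approximated by its diagonal, justified not by concentration alone but by the XL-MIMO non-stationarity argument that LoS footprints of distinct UEs barely overlap (with LoS present, the off-diagonal limits $\overline{\mathbf{h}}_{il}^\mathrm{H}\overline{\mathbf{h}}_{i'l}$ do not vanish as $M\to\infty$). Only after this diagonalization does the quadratic-form convergence you correctly invoke (your step (ii), the paper's lemma based on \cite[Lemma~4]{ergodicSINRproof}) turn the projection term into the per-interferer sum $\sum_{i\neq k}\tr(\mathbf{X}_{kl}\mathbf{X}_{il})/\tr(\mathbf{X}_{il})$, giving \eqref{eq:dist:SINRkl:asy} and, with $M\to ML_k$ and $\mathbf{D}_k$ carried through, \eqref{eq:cent:SINRk:asy}. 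Without the ZF reduction and the diagonal-Gram approximation, your plan either stalls at a fixed-point characterization or proves a different statement.
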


\begin{proof}
     See Appendix \ref{app:proof:SINR_asym_dist}.
\end{proof}

\begin{corollary}
\label{cor:complexity:asymptotic}
    The computational complexity (in real multiplications) to evaluate \eqref{eq:cent:SINRk:asy} and \eqref{eq:dist:SINRkl:asy} across the $L_k$ serving subarrays is, respectively:
    \begin{subequations}
        \begin{align}
            \mathcal{C}_k^\mathrm{cent,asy} &= 3 (U-1) M^2 L_k^2,
        \label{eq:cent:SINRk:asy:complexity} \\
            \mathcal{C}_k^\mathrm{dist,asy} &= 3 (U-1) M^2 L_k.
        \label{eq:dist:SINRkl:asy:complexity}
        \end{align}
    \end{subequations}
\end{corollary}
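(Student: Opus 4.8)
The plan is to count, for a fixed user $k$, the real multiplications required to evaluate \eqref{eq:cent:SINRk:asy} and \eqref{eq:dist:SINRkl:asy}, adopting the same cost model as for Corollary~\ref{cor:complexity:ergodic}: the long-term statistical matrices $\mathbf{Q}_i$, $\mathbf{C}_i$ (and hence $\mathbf{X}_i=\mathbf{Q}_i-\mathbf{C}_i$ and $\mathbf{X}_{il}=\mathbf{Q}_{il}-\mathbf{C}_{il}$) are precomputed and reused, since they depend only on slowly-varying statistics; one complex multiplication is charged $3$ real multiplications; the trace $\tr(\mathbf{A}\mathbf{B})=\sum_{a,b}A_{ab}B_{ba}$ of a product of two $n\times n$ matrices is charged $n^2$ complex multiplications; and additions, scalar divisions, and the $O(U)$ power-scaling operations are of lower order and not tracked. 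Under this model the proof reduces to locating and costing the leading matrix-level terms.

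First I would dispatch the cheap pieces. In both \eqref{eq:cent:SINRk:asy} and \eqref{eq:dist:SINRkl:asy}, the ``self'' traces $\tr(\mathbf{D}_k\mathbf{X}_k\mathbf{D}_k)$, $\tr(\mathbf{D}_k\mathbf{X}_i\mathbf{D}_k)$, $\tr(\mathbf{D}_k\mathbf{C}_i\mathbf{D}_k)$ (resp.\ $\tr(\mathbf{X}_{kl})$, $\tr(\mathbf{X}_{il})$, $\tr(\mathbf{C}_{il})$) are mere sums of real diagonal entries of Hermitian matrices and cost no multiplications; the factors $\mathbf{D}_k\mathbf{X}_i\mathbf{D}_k$ and $\mathbf{D}_k\mathbf{X}_k\mathbf{D}_k$ are obtained from the precomputed $\mathbf{X}_i,\mathbf{X}_k$ by extracting the $L_k$ serving blocks, which is free; and the denominators cost only $O(U)$ scalings and one outer division. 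Hence the whole multiplication budget sits in the $U-1$ cross terms of the numerator.

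Then I would cost those cross terms. For the centralized expression, using $\mathbf{D}_k^2=\mathbf{D}_k$, each term is $\tr\!\big((\mathbf{D}_k\mathbf{X}_i\mathbf{D}_k)(\mathbf{D}_k\mathbf{X}_k\mathbf{D}_k)\big)$, the trace of a product of two matrices of effective dimension $ML_k$; note that $\mathbf{D}_k\mathbf{X}_k\mathbf{D}_k$ is \emph{not} block-diagonal, because the rank-one LoS term $\overline{\mathbf{h}}_k\overline{\mathbf{h}}_k^\mathrm{H}$ couples the serving subarrays, so the dense $ML_k\times ML_k$ evaluation is the natural one. This costs $(ML_k)^2$ complex $=3M^2L_k^2$ real multiplications, and summing over the $U-1$ interfering users yields $\mathcal{C}_k^\mathrm{cent,asy}=3(U-1)M^2L_k^2$. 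For the distributed expression the same argument runs inside a single subarray: for each $l$, the $U-1$ cross terms $\tr(\mathbf{X}_{kl}\mathbf{X}_{il})$ are traces of products of $M\times M$ matrices, i.e.\ $M^2$ complex $=3M^2$ real multiplications each, giving $3(U-1)M^2$ per subarray; multiplying by the $L_k$ subarrays in $\D_k$ gives $\mathcal{C}_k^\mathrm{dist,asy}=3(U-1)M^2L_k$.

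The hard part is not any individual calculation but the bookkeeping that ensures nothing contributes at the reported order outside those cross terms: one must check that forming $\mathbf{D}_k\mathbf{X}_i\mathbf{D}_k$, evaluating the scalar traces, applying the $p_i/(ML_kp_k)$ weights, and performing the $U-1$ ratio divisions plus the final division are all $O(ML_k)$ or $O(U)$ and hence absorbed, so that the leading-order count is exactly $3(U-1)M^2L_k^2$ (resp.\ $3(U-1)M^2L_k$) with no residual lower-order polynomial in $M$ and $L_k$; this contrasts with the ergodic case of Corollary~\ref{cor:complexity:ergodic}, where the $ML_k\times ML_k$ matrix inversion forces such residual terms to appear. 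A secondary point is merely fixing the arithmetic model (the $3$-real-multiplication convention and the $n^2$-complex-multiplication trace cost) consistently with that corollary; with that in place, the count follows by inspection.
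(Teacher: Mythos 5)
Your proposal is correct and follows essentially the same route as the paper: the cost is dominated by the $U-1$ cross-term traces $\tr(\mathbf{A}\mathbf{B})$, each charged $N^2$ complex ($3N^2$ real) multiplications with $N=ML_k$ for the centralized case and $N=M$ per serving subarray (times $L_k$) for the distributed case. Your additional bookkeeping — checking that the self-traces, selection-matrix masking, scalings and divisions are lower order, and noting that the rank-one LoS term prevents a block-diagonal shortcut in the centralized expression — only makes explicit what the paper leaves implicit.
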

    
\begin{proof}
    Complexity is dominated by evaluating trace terms $\tr(\mathbf{A}\mathbf{B})$ for the $U-1$ interfering UEs ($i \neq k$). Computing the trace of a product of two square matrices of size $N$ requires calculating only the diagonal elements of the product, costing $N^2$ complex multiplications.
    For distributed operation \eqref{eq:dist:SINRkl:asy}, this involves $M \times M$ matrices ($N=M$) for each of the $L_k$ subarrays.
    For centralized operation \eqref{eq:cent:SINRk:asy}, the matrices have effective size $ML_k \times ML_k$ ($N=ML_k$).
    Assuming three real multiplications per complex multiplication \cite[p.~378]{massivemimobook}, we arrive at \eqref{eq:cent:SINRk:asy:complexity} and \eqref{eq:dist:SINRkl:asy:complexity}.
\end{proof}

The approximate achievable \gls{SE} is obtained via the mapping $g(\cdot)$ in \eqref{eq:g}. For centralized operation, we evaluate $\eta_k^\mathrm{cent,asy} = g(\Gamma_k^\mathrm{cent,asy})$ and $\eta_k^\mathrm{cent,erg} = g(\Gamma_k^\mathrm{cent,erg})$. For distributed operation, it is reconstructed from the local \gls{SINR} approximations as $\hat{\eta}_k^\mathrm{dist,asy} = \hat{g}\big(\{ \Gamma_{kl}^\mathrm{dist,asy} \}_{l\in\D_k}\big)$ and $\hat{\eta}_k^\mathrm{dist,erg} = \hat{g}\big(\{ \Gamma_{kl}^\mathrm{dist,erg} \}_{l\in\D_k}\big)$.

\subsection{Approximation Accuracy}
\label{subsec:SINR:accuracy}

This section validates the derived \gls{SINR} approximations. Simulations assume $L=16$ and $L_k=4$, employing \gls{LSF}-based subarray selection and optimal weighting
.\footnote{The choice of $L_k=4$ is justified by the performance-complexity trade-off observed in Fig. \ref{fig:mean_SE}.} We enforce pilot contamination by setting $\tau_p = U/2$ with random assignment, while $\tau_c = U$. Remaining parameters are listed in Table \ref{tab:parameters}.

We assess the accuracy of the derived approximations using the \gls{MNAE} metric. For a generic variable $x$ and its estimate $\hat{x}$, the \gls{MNAE} is defined as:\footnote{Throughout this work, $\mathbb{E}\{\cdot\}$ denotes the expectation over \gls{SSF} realizations.}
\begin{equation}
    \nu(x,\hat{x}) = \E\left\{ \frac{\lvert \hat{x} - x \rvert}{x} \right\}.
\label{eq:MNAE}
\end{equation}

To validate the global \gls{SINR} approximation for distributed operation derived in Proposition \ref{prop:dist:globalSINRapprox_and_optimalweighting}, we evaluate the error between the true distributed \gls{SE} and its estimate based on local \glspl{SINR}:
\begin{equation}
    \tilde{\eta}_k^\mathrm{dist} = \nu(\eta_k^\mathrm{dist}, \hat{\eta}_k^\mathrm{dist}).
\label{eq:MNAE:inst_dist}
\end{equation}

Since the asymptotic expressions target the \emph{instantaneous} performance, their accuracy is measured relative to the true instantaneous \gls{SE}:
\begin{equation}
    \tilde{\eta}_k^\mathrm{op,asy} = \nu(\eta_k^\mathrm{op}, \hat{\eta}_k^\mathrm{op,asy}), \quad \text{for } \mathrm{op} \in \{\mathrm{cent, dist}\}.
\label{eq:MNAE:asy}
\end{equation}

Conversely, the ergodic expressions target the \emph{average} performance. Thus, we compare them against the true mean \gls{SE}:
\begin{equation}
    \tilde{\eta}_k^\mathrm{op,erg} = \nu(\E\{\eta_k^\mathrm{op}\}, \hat{\eta}_k^\mathrm{op,erg}), \quad \text{for } \mathrm{op} \in \{\mathrm{cent, dist}\}.
\label{eq:MNAE:erg}
\end{equation}

We first assess the accuracy of the global \gls{SINR} approximation for distributed operation in \eqref{eq:dist:SINRk_from_SINRkl} using Fig. \ref{fig:MNAE_instSE_x_M_dist}, which plots the \gls{MNAE} $\tilde{\eta}_k^\mathrm{dist}$ versus $M$.\footnote{We actually plot the accuracy averaged across both scheduled \glspl{UE} and channel realizations.} The approximation error decreases steadily with $M$ and increases with $U$, since a larger $M/U$ ratio provides stronger interference suppression, satisfying the derivation's assumption of negligible coherent interference. On the other hand, selecting fewer subarrays (e.g., $L_k=2$), yields lower \gls{MNAE} than larger subsets (e.g., $L_k=8$), as the \gls{LSF}-based strategy can restrict selection to t{he strongest subarrays where $\mathbf{v}_{kl}^\mathrm{H} \widehat{\mathbf{h}}_{kl} \approx 1$.\footnote{See Appendix \ref{app:proof:globalSINRapprox_and_optimalweighting} for more details on the validity conditions for the approximation.} These factors render the approximation robust even in the overloaded regime ($M < U$).

\begin{figure}[!ht]
    \centering
    \includegraphics[trim={0mm 0mm 0mm 0mm},clip,width=.75\linewidth]{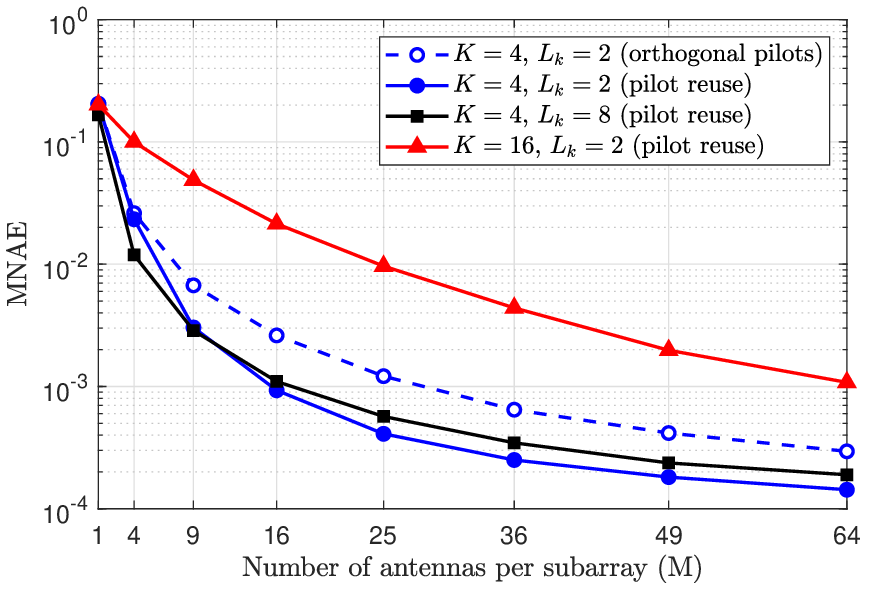}
    \caption{MNAE of the global SE approximation for distributed operation versus subarray antennas.}
\label{fig:MNAE_instSE_x_M_dist}
\end{figure}

Figure \ref{fig:MNAE_SE_vs_U} evaluates the ergodic and asymptotic approximations as a function of $U$, with fixed $M=16$ and $L_k=4$. The subfigures depict the \gls{MNAE} for distributed (Fig.~\ref{fig:MNAE_SE_x_U_dist}) and centralized (Fig.~\ref{fig:MNAE_SE_x_U_cent}) operations, assessing the impact of spatial correlation and pilot contamination. To this end, we compare spatially correlated vs. uncorrelated fading, and contrast orthogonal pilot assignment ($\tau_\mathrm{p} = U$) with pilot reuse ($\tau_\mathrm{p} = U/2$, with random assignment).

\begin{figure}[!ht]
    \centering
    \subfloat[Distributed operation.]{\includegraphics[trim={0mm 0mm 0mm 0mm},clip,width=.75\linewidth]{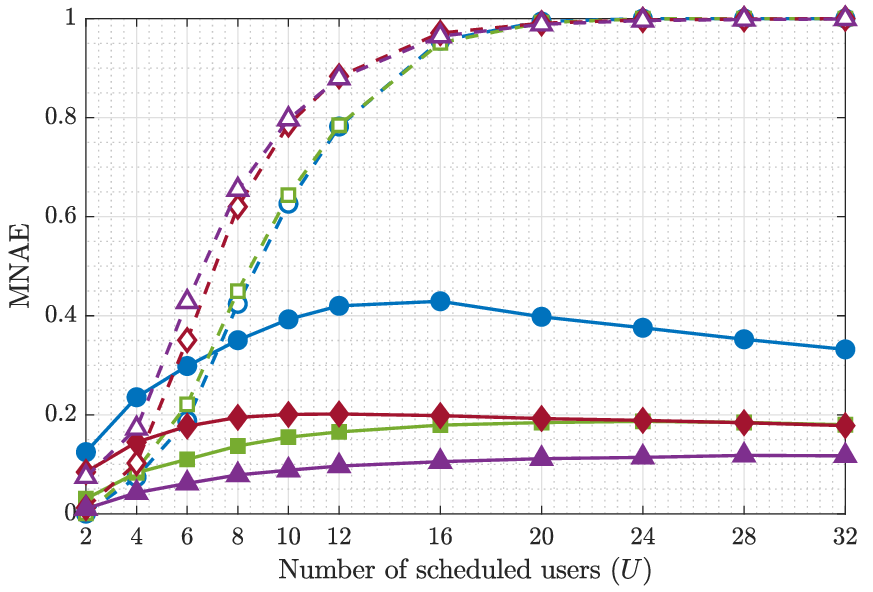}
    \label{fig:MNAE_SE_x_U_dist}}
    \\
    \subfloat[Centralized operation.]{\includegraphics[trim={0mm 0mm 0mm 0mm},clip,width=.75\linewidth]{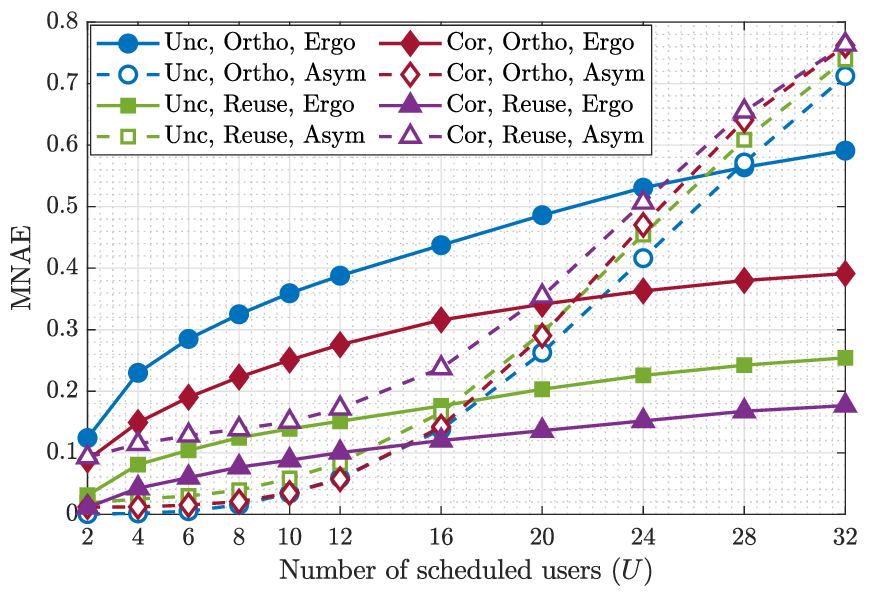}
    \label{fig:MNAE_SE_x_U_cent}}
    \caption{MNAE of the deterministic SE approximations vs. number of UEs, contrasting ergodic and asymptotic approximations under varying spatial correlation and pilot reuse scenarios.}
\label{fig:MNAE_SE_vs_U}
\end{figure}

The \gls{MNAE} of the asymptotic expression increases with $U$. As the system load grows, the variance of the aggregate interference becomes more significant, causing the instantaneous realizations to deviate further from the deterministic limit. This divergence is more pronounced in compared to the centralized case. The latter benefits from a higher diversity order ($ML_k$ antennas involved in combining), which accelerates convergence to the deterministic mean.

According to Fig~\ref{fig:MNAE_SE_vs_U}, spatial correlation and pilot reuse degrade the accuracy of the asymptotic approximations by violating the tractability assumptions underlying \eqref{eq:cent:SINRk:asy} and \eqref{eq:dist:SINRkl:asy}, which rely on uncorrelated \gls{NLoS} components and small channel estimation errors. Furthermore, correlation reduces spatial diversity, which hinders channel hardening and thereby delays the \gls{SINR} convergence to the deterministic regime, while pilot contamination scales with $U$. These factors introduce direct modeling discrepancies, causing the instantaneous \gls{SE} to diverge from the asymptotic approximation.

Counter-intuitively, spatial correlation and pilot reuse reduce the \gls{MNAE} of the ergodic approximations by increasing the deterministic covariance $\mathbf{C}_{kl}$, which stabilizes $\mathbf{Z}_{kl}$ around its mean, better satisfying the Neumann series condition required for these approximations. The error evolution is driven by competing effects: it initially rises with $U$ as additional interferers increase the variance of $\mathbf{Z}_{kl}$, and then it peaks near the spatial degrees of freedom ($U \approx M$ for distributed operation and $U \approx ML_k$ for centralized). As we increase $U$ beyond this peak, $\mathbf{Z}_{kl}$ concentrates around its expected value. This explains why, for the same $M$, the ergodic approximation is more accurate in distributed operation than in centralized: it reaches saturation significantly earlier than the centralized.

Given the contrasting error behaviors of the ergodic and asymptotic approximations, we define a switching point, $U_{\text{switch}}$, to select the best approximation. The asymptotic expression is prioritized for $U \le U_{\text{switch}}$ due to its superior accuracy and lower complexity, while the ergodic expression is adopted for $U > U_{\text{switch}}$. The optimal $U_\mathrm{switch}$ depends on channel conditions and the pilot assignment strategy, shifting toward lower values of $U$ when correlation or pilot reuse is present. Table \ref{tab:U_switch_values} summarizes the empirically determined switching thresholds derived from the intersection points of the asymptotic and ergodic error curves in Fig.~\ref{fig:MNAE_SE_vs_U}. Note that the thresholds for centralized operation are consistently scaled by $L_k$ compared to the distributed case. 
Note also that, in the most severe scenario (correlated NLoS with pilot reuse), the threshold drops to $U_{\text{switch}} \approx 0$. This indicates that the asymptotic approximation is unreliable for any practical user load under these conditions, and the ergodic approximation should be employed exclusively for all $U$.

\begin{table}[!ht]
\centering
\caption{Switching threshold $U_{\text{switch}}$ values.}
\label{tab:U_switch_values}
\setlength\tabcolsep{3pt}
\small
\begin{tabular}{llcc}
\toprule
\multicolumn{2}{c}{\textbf{Scenario}} & \multicolumn{2}{c}{\textbf{Operation}} \\
\cmidrule(r){1-2} \cmidrule(l){3-4}
\textbf{NLoS} & \textbf{Pilots} & \textbf{Centralized} & \textbf{Distributed} \\
\midrule
\multirow{2}{*}{Uncorrelated} & Orthogonal & $ML_k/2$ & $M/2$ \\
 & Reuse & $ML_k/4$ & $M/4$ \\
\cmidrule{1-4}
\multirow{2}{*}{Correlated} & Orthogonal & $ML_k/4$ & $M/4$ \\
 & Reuse & $0$ & $0$ \\
\bottomrule
\addlinespace
\end{tabular}
\end{table}




The validity of the asymptotic approximations relies on channel hardening, confirmed by the monotonic \gls{MNAE} decrease when increasing $M$ in Fig.~\ref{fig:MNAE_asym_vs_M}. Centralized operation converges significantly faster than distributed operation, achieving accuracy even at moderate $M$ due to the larger effective array size ($ML_k$ vs. $M$) involved in joint combining. Conversely, accuracy degrades with pilot reuse and higher user loads ($U$). The simulation setup includes spatially correlated NLoS components and $L_k=2$.

\begin{figure}[!ht]
    \centering
    \includegraphics[trim={0mm 0mm 0mm 0mm},clip,width=.75\linewidth]{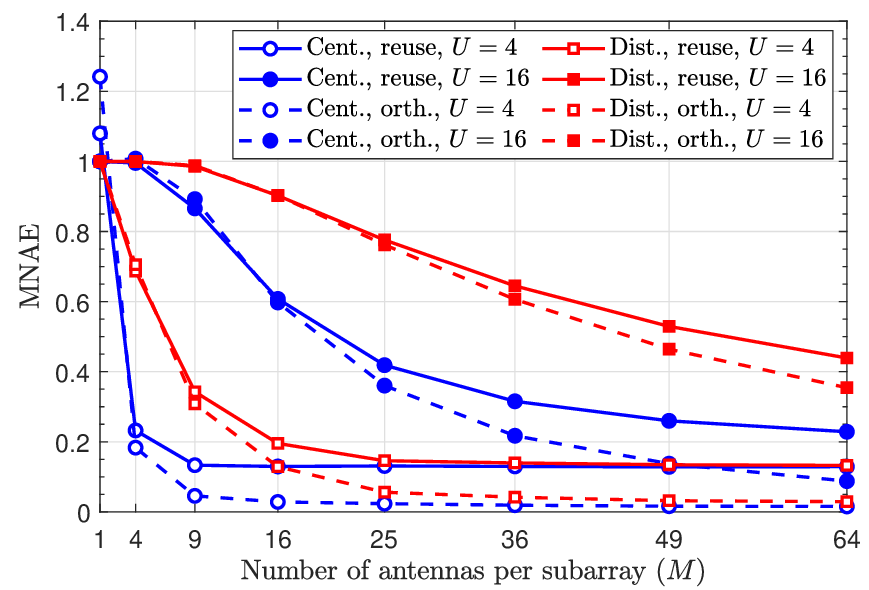}
    \caption{MNAE of the asymptotic SE approximation versus subarray antennas ($M$).}
\label{fig:MNAE_asym_vs_M}
\end{figure}


\section{Applying User Scheduling, Pilot Assignment and Subarray Selection in XL-MIMO}
\label{sec:algorithms}


A widely used benchmark for joint pilot assignment and subarray selection is the greedy algorithm in \cite[Sec. 4.4]{CFbook}, which 
follows a three-stage procedure: (i) assign $\tau_\mathrm{p}$ orthogonal pilots to $\tau_\mathrm{p}$ random \glspl{UE}; (ii) sequentially assign pilots to the remaining \glspl{UE} so as to minimize pilot contamination\footnote{When assigning a candidate pilot $t$ to a \gls{UE} $k$, \cite[Sec. 4.4]{CFbook} measures pilot contamination as the sum of the \gls{NLoS} channel gains of the \glspl{UE} already using pilot $t$, evaluated at \gls{UE} $k$'s strongest subarray. Only the \gls{NLoS} component is considered because the deterministic \gls{LoS} component does not contribute to the channel estimation error.}; and (iii) perform subarray selection, where each subarray locally chooses which $\tau_\mathrm{p}$ \glspl{UE} to serve, specifically one \gls{UE} per pilot (namely, the strongest \gls{UE} using that pilot). {While this approach ensures minimal computational complexity, it does not strictly guarantee full user scheduling, as certain \glspl{UE} may remain unselected by any subarray.}

This section proposes joint user scheduling and pilot assignment algorithms to enhance the minimum per-user \gls{SE}. 
Algorithm \ref{alg:NMSEbased} maximizes the number of scheduled \glspl{UE} subject to a maximum channel estimation error threshold, $\gamma_\mathrm{th}$ (line 2). 
It iteratively schedules \glspl{UE} in descending order of \gls{LSF} gain 
(line 3). In each step, the candidate \gls{UE} is assigned the pilot (either an existing or a new one) that minimizes the maximum channel estimation \gls{NMSE} across the candidate user set $\tilde{\U}$ (lines 4-9). 
If the resulting maximum \gls{NMSE} satisfies $\gamma_\mathrm{th}$, the \gls{UE} is scheduled (lines 10-12).

\begin{algorithm}[!ht]
\caption{NMSE-Based 
User Scheduling \& Pilot Assignment
}
\label{alg:NMSEbased}
\begin{algorithmic}[1]
    \Require
    \Statex NMSE threshold: $\gamma_{\mathrm{th}}$
    \Statex Subarray selection sets: $\{\D_i\}_{i\in\K}$
    \Statex Channel statistics: $\{\mathbf{Q}_{kl},\mathbf{R}_{kl}\}_{k\in\K,l\in\D}$, $\{\beta_k\}_{k\in\K}$
    \Statex Transmit powers: $\{p_k\}_{k\in\K}$
    \Ensure
    \Statex Set of scheduled users: \(\U\)
    \Statex Pilot indices: $\{t_k\}_{k\in\U}$
    \State Initialize: $\U\gets\emptyset$,\; $\tau_\mathrm{p}\gets0$,\; $\gamma_{\max}\gets0$
    \While{$|\U|<K$ \textbf{and} $\gamma_{\max}\le\gamma_\mathrm{th}$}
        \State $k\gets
        \operatorname*{arg\,max}_{i\in\K\setminus\U} \beta_i$ and $\tilde{\U} \gets \U \cup \{k\}$
        \For{$t=1$ \textbf{to} $\tau_\mathrm{p}+1$}
            \State $t_k \gets t$,\; $\tau
            \gets \max(\tau_\mathrm{p},t_k)$,\; $\zeta \gets \{1,\ldots,\tau\}$
            \State $\mathbf{\Psi}_{\tilde{t}l} \gets 
            \sum_{\substack{i\in\tilde{\U} \\ t_i=\tilde{t}}} p_i \tau
            \mathbf{R}_{il} + \sigma_\mathrm{n}^2 \eye_M,\quad l\in\D,\; \tilde{t}\in\zeta$
            \State $\gamma_{\max}^{(t)} \gets \underset{i\in\tilde{\U}}{\max} \frac{ 
            \sum_{l\in\D_i} \tr(\mathbf{R}_{il} - p_i \tau
            \mathbf{R}_{il} \mathbf{\Psi}_{t_il}^{-1} \mathbf{R}_{il})}{
            \sum_{l\in\D_i} \tr(\mathbf{Q}_{il})}$
        \EndFor
        \State $t_k \gets \displaystyle \operatorname*{arg\,min}_{t\in\{1,\ldots,\tau_\mathrm{p}+1\}} \gamma_{\max}^{(t)}$ and $\gamma_{\max} \gets \gamma_{\max}^{(t_k)}$
        \If{\(\gamma_{\max} \leq \gamma_\mathrm{th}\)}
            \State $\U \gets \tilde{\U}$ and $\tau_\mathrm{p} \gets \max(\tau_\mathrm{p}, t_k)$
        \EndIf
    \EndWhile
\end{algorithmic}
\end{algorithm}


Designed for centralized and distributed operation, respectively, Algorithms \ref{alg:cent:SINRbased} and \ref{alg:dist:SINRbased} adopt the same framework of Algorithm \ref{alg:NMSEbased}. However, rather than minimizing the maximum channel estimation \gls{NMSE}, their pilot assignment criteria maximizes the minimum per-user \gls{SE}, and their user scheduling strategy maximizes the number of scheduled \glspl{UE} subject to a minimum per-user \gls{SE}, $\eta_\mathrm{th}$. More than scheduling \glspl{UE} and assigning pilots, Algorithm \ref{alg:dist:SINRbased} computes the optimal weights used to obtain the final signal estimates from the local estimates (lines 15 and 21).


\begin{algorithm}[!ht]
\caption{SINR-Based Joint User Scheduling \& Pilot Assignment For Centralized Operation}
\label{alg:cent:SINRbased}
\begin{algorithmic}[1]
    \Require
        \Statex Switching point: $U_\mathrm{switch}$
        \Statex SE threshold: $\eta_\mathrm{th}$
        \Statex Subarray selection matrices: $\{\mathbf{D}_k\}_{k\in\K}$
        \Statex Number of subarrays serving each user: $\{L_k\}_{k\in\K}$
        \Statex Channel statistics: $\{\mathbf{Q}_{k},\mathbf{R}_{k},\beta_k\}_{k\in\K}$
        \Statex Transmit powers: $\{p_k\}_{k\in\K}$
    \Ensure
        \Statex Set of scheduled users: \(\U\)
        \Statex Pilot indices: $\{t_k\}_{k\in\U}$
    \State Initialize: $\U\gets\emptyset$,\; $\tau_\mathrm{p}\gets0$,\; $\eta_{\min} \gets 2 \eta_\mathrm{th}$
    \While{$|\U|<K$ \textbf{and} $\eta_{\min}>\eta_\mathrm{th}$}
        \State $k\gets
        \operatorname*{arg\,max}_{i\in\K\setminus\U} \beta_i$ and $\tilde{\U} \gets \U \cup \{k\}$
        \For{$t=1$ \textbf{to} $\tau_\mathrm{p}+1$}
            \State $t_k \gets t$,\; $\tau \gets \max(\tau_\mathrm{p},\,t_k)$,\; $\zeta \gets \{1,\ldots,\tau\}$
            \State $\mathbf{\Psi}_{\tilde{t}} \gets 
            \sum_{\substack{i\in\tilde{\U} \\ t_i=\tilde{t}}} p_i \tau
            \mathbf{R}_i + \sigma_\mathrm{n}^2 \eye_{ML},\quad \tilde{t}\in\zeta$
            \State $\mathbf{C}_i \gets \mathbf{R}_i - p_i \tau \mathbf{R}_i \mathbf{\Psi}_{t_i}^{-1} \mathbf{R}_i,\quad i\in\tilde{\U}$
            \If{$|\tilde{\U}| > U_\mathrm{switch}$}: for $i\in\tilde{\U}$:
                \State $\overline{\mathbf{Z}}_i \gets \mathbf{D}_i \Bigl( \sum\limits_{\substack{i'\in\tilde{\U}\\i'\neq i}} p_{i'} \mathbf{Q}_{i'} + p_i \mathbf{C}_i\Bigr) \mathbf{D}_i + \sigma_\mathrm{n}^2 \eye_{ML}$
                \State 
                $\Gamma_i \gets p_i \tr[\mathbf{D}_i \overline{\mathbf{Z}}_i^{-1} \mathbf{D}_i \left(\mathbf{Q}_i - \mathbf{C}_i \right) \mathbf{D}_i]$
            \Else: for $i\in\tilde{\U}$:
                \State $\mathbf{S}_{ii'} \gets \mathbf{D}_i (\mathbf{Q}_{i'} - \mathbf{C}_{i'}) \mathbf{D}_i$,\; $i' \in \tilde{\U}$
                \State 
                $\Gamma_i \gets \frac{\displaystyle 
                \tr(\mathbf{S}_{ii}) - \sum_{\substack{i'\in\tilde{\U}\setminus\{i\}}} \frac{\tr(\mathbf{S}_{ii} \mathbf{S}_{ii'})}{\tr(\mathbf{S}_{ii'})}}{\displaystyle \sum_{i'\in\tilde{\U}} \frac{p_{i'}}{ML_ip_i} \tr(\mathbf{D}_i\mathbf{C}_{i'}\mathbf{D}_i) + \sigma_\mathrm{n}^2}$
            \EndIf
            \State $\eta_{\min}^{(t)} \gets \displaystyle \min_{i\in\tilde{\U}} \left( 1-\frac{\tau}{\tau_\mathrm{c}} \right) \log_2 (1+\Gamma_i)$
        \EndFor
        \State $t_k \gets \displaystyle \operatorname*{arg\,max}_{t\in\{1,\ldots,\tau_\mathrm{p}+1\}} \eta_{\min}^{(t)}$ and $\eta_{\min} \gets \eta_{\min}^{(t_k)}$
        \If{$\eta_{\min} \geq \eta_\mathrm{th}$}
            \State $\U \gets \tilde{\U}$ and $\tau_\mathrm{p} \gets \max(\tau_\mathrm{p}, t_k)$
        \EndIf
    \EndWhile
\end{algorithmic}
\end{algorithm}

\begin{algorithm}[!ht]
\caption{SINR-Based Joint 
User Scheduling \& Pilot Assignment For Distributed Operation}
\label{alg:dist:SINRbased}
\begin{algorithmic}[1]
    \Require 
    \Statex Switching point: $U_\mathrm{switch}$
    \Statex SE threshold: $\eta_\mathrm{th}$
    \Statex Subarray selection sets: $\{\D_i\}_{i\in\K}$
    \Statex Channel statistics: $\{\mathbf{Q}_{kl},\mathbf{R}_{kl}\}_{k\in\K,\,l\in\D}$, $\{\beta_k\}_{k\in\K}$
    \Statex Transmit powers: $\{p_k\}_{k\in\K}$
    \Ensure
    \Statex Set of scheduled users: \(\U\)
    \Statex Pilot indices: $\{t_k\}_{k\in\U}$
    \Statex Weights: $\{\mu_{kl}\}_{k\in\U,\,l\in\D_k}$ 
    \State Initialize: $\U\gets\emptyset$,\, $\tau_\mathrm{p}\gets0$,\, $\eta_{\min} \gets 2 \eta_\mathrm{th}$
    \While{$\lvert\U\rvert<K$ \textbf{and} $\eta_{\min} > \eta_\mathrm{th}$}
        \State $k\gets
        \operatorname*{arg\,max}_{i\in\K\setminus\U} \beta_i$ and $\tilde{\U} \gets \U \cup \{k\}$
        \For{$t=1$ \textbf{to} $\tau_\mathrm{p}+1$}
            \State $t_k \gets t$,\; $\tau \gets \max(\tau_\mathrm{p},\,t_k)$,\; $\zeta \gets \{1,\ldots,\tau\}$
            \State $\mathbf{\Psi}_{\tilde{t}l} \gets 
            \sum_{\substack{i\in\tilde{\U} \\ t_i=\tilde{t}}} p_i \tau
            \mathbf{R}_{il} + \sigma_\mathrm{n}^2 \eye_M,\quad l\in\D,\; \tilde{t}\in\zeta$
            \State \( \mathbf{C}_{il} \gets \mathbf{R}_{il} - p_i \tau \mathbf{R}_{il} \mathbf{\Psi}_{t_il}^{-1} \mathbf{R}_{il},\quad l\in\D,\; i\in\tilde{\U}\)
            \State $\mathbf{S}_{il} = \mathbf{Q}_{il} - \mathbf{C}_{il},\quad l\in\D,\; i\in\tilde{\U}$
            \If{$|\tilde{\U}| > U_\mathrm{switch}$}: for $i\in\tilde{\U}$ and $l\in\D_i$:
                \State $\overline{\mathbf{Z}}_{il} \gets \displaystyle \sum_{i'\in\tilde{\U} \setminus\{i\}}
                p_{i'} \mathbf{Q}_{i'l} + p_i \mathbf{C}_{il} + \sigma_\mathrm{n}^{2} \eye_M$
                \State $\Gamma_{il} \gets \displaystyle p_i \tr[\overline{\mathbf{Z}}_{il}^{-1}(\mathbf{Q}_{il} - \mathbf{C}_{il})]$
            \Else: for $i\in\tilde{\U}$ and $l\in\D_i$:
                \State $\Gamma_{il} \gets \frac{\displaystyle \tr(\mathbf{S}_{il}) - \sum_{i' \in \tilde{\U} \setminus\{i\}} \frac{\tr(\mathbf{S}_{il} \mathbf{S}_{i'l})}{\tr(\mathbf{S}_{i'l})}}{\displaystyle \sum_{i'\in\tilde{\U}} \frac{p_{i'}}{Mp_i} \tr(\mathbf{C}_{i'l}) + \sigma_\mathrm{n}^2}$
            \EndIf
            \State $\mu_{il}^{(t)} \gets \displaystyle \Gamma_{il} \Big/ \sum\limits_{l'\in\D_i} \Gamma_{il'}$, $i\in\tilde{\U}$ and $l\in\D_i$
            \State $\eta_{\min}^{(t)} \gets \displaystyle \min_{i\in\tilde{\U}} \left( 1-\frac{\tau}{\tau_\mathrm{c}} \right) \log_2 \left( 1 + \sum_{l\in\D_i} \Gamma_{il} \right)$
        \EndFor
        \State $t_k \gets \displaystyle \operatorname*{arg\,max}_{t\in\{1,\ldots,\tau_\mathrm{p}+1\}} \eta_{\min}^{(t)}$ and $\eta_{\min} \gets \eta_{\min}^{(t_k)}$
        \If{$\eta_{\min} \geq \eta_{\mathrm{th}}$}
            \State $\U \gets \tilde{\U}$ and $\tau_\mathrm{p} \gets \max(\tau_\mathrm{p}, t_k)$
            \State $\mu_{il}\gets\mu_{il}^{(t_k)}$,\; $i\in\mathcal{U}$ and $l\in\D_i$
        \EndIf
    \EndWhile
\end{algorithmic}
\end{algorithm}

For each candidate pilot, Algorithm \ref{alg:cent:SINRbased} evaluates the \glspl{SINR} of the \glspl{UE} $\tilde{\U}$ using the ergodic approximation in \eqref{eq:cent:SINRk:erg} when $\lvert\tilde{\U}\rvert > U_\mathrm{switch}$ and the asymptotic approximation \eqref{eq:cent:SINRk:asy} otherwise (lines 6–14), and then it estimates the minimum \gls{SE} based on these \glspl{SINR} (line 15). This conditional switching rule improves the accuracy of the \gls{SINR} estimates by selecting the analytical expression that is known to be more reliable for each loading regime. Its value is chosen based on the numerical analysis in Section \ref{subsec:SINR:accuracy}.

Similarly, Algorithm \ref{alg:dist:SINRbased} evaluates the resulting \glspl{SINR} for each candidate pilot using \eqref{eq:dist:SINRkl:erg} when $\lvert\tilde{\U}\rvert > U_\mathrm{switch}$ and \eqref{eq:dist:SINRkl:asy} otherwise (lines 6–14). Since optimal weighting is employed, line 16 approximates the global \gls{SINR} as the sum of the local \glspl{SINR} at the selected subarrays, following Proposition \ref{prop:dist:globalSINRapprox_and_optimalweighting}.

\subsection{Performance Evaluation}
\label{subsec:SS_US_PA:results}

{This subsection assesses} the efficacy of the proposed algorithms in maximizing the minimum per-user \gls{SE}, {demonstrating that strategies} based on deterministic \gls{SINR} expressions {achieve performance comparable to those} based on instantaneous \gls{SINR}, {but with significantly} reduced computational complexity. We assume $L=16$, $L_k=4$, and $M=16$, {with the simulation setup following the parameters established in} Section \ref{subsec:SINR:accuracy}. {Since \gls{LSF}-based subarray selection yields performance comparable to the \gls{SINR}-based strategy (as shown in Section \ref{subsec:SINR:accuracy}), it is adopted hereafter for its lower complexity.}

Figure \ref{fig:alg:minSE} {depicts} the minimum per-user \gls{SE} as a function of the number of scheduled users ($U$). {We compare Algorithms \ref{alg:cent:SINRbased} and \ref{alg:dist:SINRbased} (labeled \textit{Max-Min SE (Ana)}) against a numerical benchmark, \textit{Max-Min SE (Num)}. This benchmark follows the same logic as the proposed algorithms but utilizes the numerical mean \gls{SINR} for pilot assignment instead of the derived deterministic approximations. Two baselines serve as references: Algorithm \ref{alg:NMSEbased}, which is labeled \textit{Orthogonal} since it assigns mutually orthogonal pilots by minimizing the maximum channel estimation \gls{NMSE}, and a \textit{Random} pilot assignment. All schemes employ user scheduling based on descending \gls{LSF} gains and \gls{LSF}-based subarray selection. The scenario features $K=16$ \glspl{UE} and a short coherence block ($\tau_\mathrm{c} = 10$), rendering pilot contamination unavoidable.}\footnote{{Results are plotted up to the full load $U=K$ for illustrative purposes. In practice, the algorithm terminates when the \gls{SE} threshold is violated, potentially leaving some \glspl{UE} unscheduled.}}


\begin{figure}[!ht]
    \centering
    \subfloat[Centralized Operation.]{
    \includegraphics[width=0.48\linewidth]{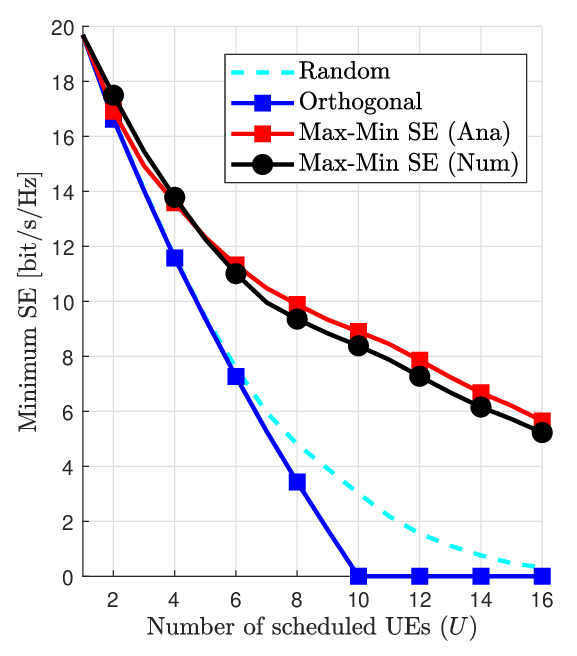}\label{fig:alg:cent:minSE}}
    \hfill
    \subfloat[Distributed Operation.]{
    \includegraphics[width=0.48\linewidth]{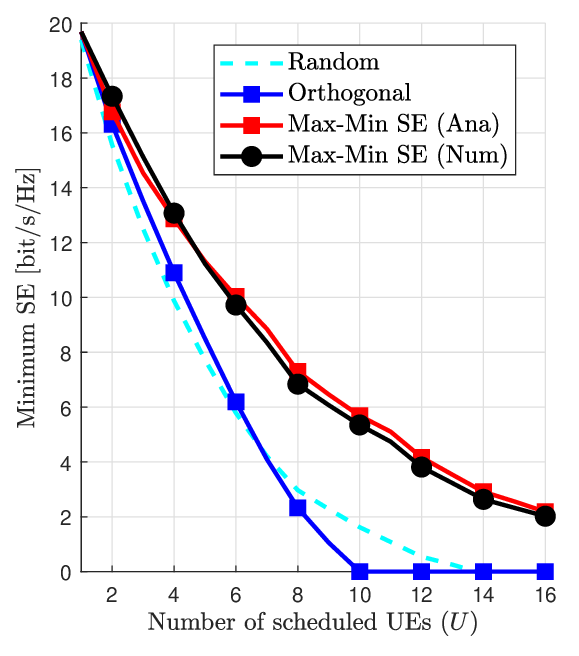}\label{fig:alg:dist:minSE}}
    \caption{Minimum per-user SE versus number of scheduled UEs.}
    \label{fig:alg:minSE}
\end{figure}

{Figures \ref{fig:alg:cent:minSE} and \ref{fig:alg:dist:minSE}, corresponding to centralized and distributed architectures respectively, show that the centralized approach consistently yields a higher minimum \gls{SE}. This is expected, as the \gls{CPU} in a centralized architecture jointly processes signals from all $L_k$ selected subarrays, enabling global interference suppression. Conversely, the distributed architecture relies on local combining, limiting interference mitigation capabilities. Notably, if each \gls{UE} were served by a single subarray ($L_k = 1$), both architectures would converge to identical performance.}


{In both architectures, the \textit{Max-Min SE (Num)} strategy effectively mitigates interference by leveraging the true mean \gls{SINR}; however, it incurs prohibitive computational costs by relying on extensive instantaneous \gls{CSI} processing. In contrast, the \textit{Max-Min SE (Ana)} strategy achieves remarkably similar performance using only deterministic \gls{SINR} approximations, offering a tractable solution. The negligible performance gap suggests that the primary gains in pilot assignment stem from resolving large-scale spatial conflicts, which the derived deterministic expressions capture efficiently without requiring Monte Carlo sampling or instantaneous fading optimization.}

{Furthermore, the proposed strategy rivals the numerical benchmark in maximizing the number of scheduled \glspl{UE} that satisfy a per-user \gls{SE} requirement. For instance, given a $5$ bit/s/Hz threshold in Fig. \ref{fig:alg:cent:minSE}}, the \textit{Random} and \textit{Orthogonal} baselines schedule only 7 \glspl{UE}, whereas both \textit{max-min} algorithms successfully schedule all 16 \glspl{UE}. These results highlight the effectiveness of deterministic \gls{SINR}-based resource allocation in enhancing fairness within crowded \gls{XL-MIMO} networks.}

While {the} intelligent strategies maintain robust performance as $U$ increases, the \textit{Random} and \textit{Orthogonal} assignments suffer severe degradation due to uncoordinated pilot reuse and reduced data transmission intervals ({caused by increasing pilot sequence lengths}), respectively.

{Finally, Figure \ref{fig:alg:comparing_with_exhaustive_search} illustrates the minimum per-user \gls{SE} as a function of the coherence block length ($\tau_\mathrm{c}$) for $K=6$ \glspl{UE}. A \textit{Max-Min SE (Num - Exhaustive Search)} benchmark is introduced to establish the theoretical performance upper bound. Unlike the sequential approach of the \textit{Max-Min SE (Num)} strategy—where pilots are assigned one-by-one to newly scheduled \glspl{UE} without re-evaluating previous assignments—the exhaustive search evaluates all possible pilot combinations to identify the global optimum.}

{Notably, the sequential mechanism employed by the proposed Algorithms \ref{alg:cent:SINRbased} and \ref{alg:dist:SINRbased} proves highly effective, achieving performance nearly identical to the optimal exhaustive search. Moreover, results suggest that the \textit{Max-Min SE (Ana)} approach may approximate the optimal pilot assignment even more closely than the numerical version. This aligns with observations in Figure \ref{fig:alg:minSE}, where the analytical strategy occasionally outperforms the \textit{Max-Min SE (Num)} benchmark, likely because the deterministic expressions effectively capture long-term spatial conflicts without the variance associated with limited instantaneous \gls{SINR} samples.}


\begin{figure}[!ht]
    \centering
    \includegraphics[width=0.75\linewidth]{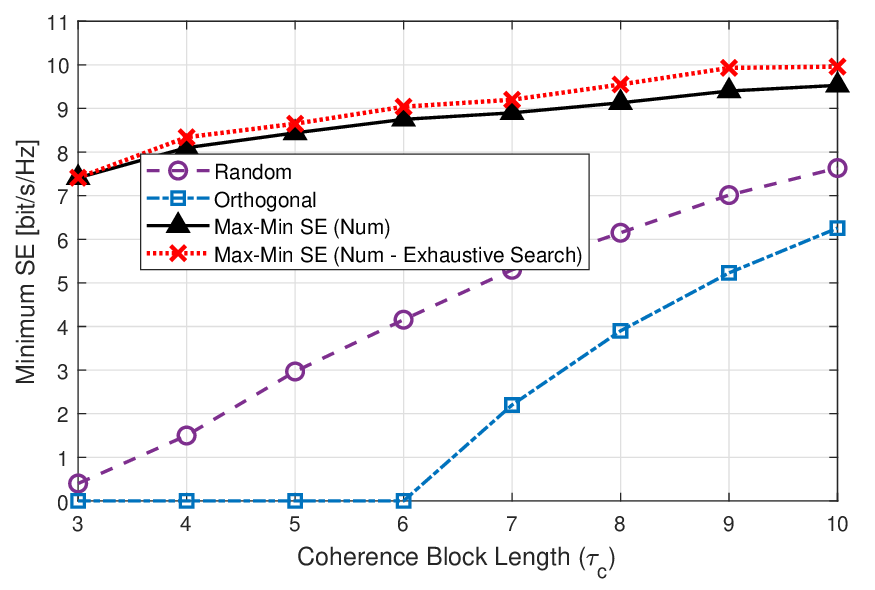}
    \caption{Minimum per-user SE versus coherence block length.}
    \label{fig:alg:comparing_with_exhaustive_search}
\end{figure}

\section{Conclusion}
\label{sec:conclusion}


This paper addressed the critical challenge of resource allocation in XL-MIMO systems, where the reliance on instantaneous CSI for optimization is computationally prohibitive. We introduced novel, closed-form deterministic SINR expressions for both centralized and distributed uplink operations. These expressions, which are valid for Rician fading channels and account for MMSE receive combining and channel estimation, depend exclusively on long-term channel statistics. This key innovation provides a tractable analytical foundation for XL-MIMO system design and optimization, eliminating the need for computationally expensive instantaneous CSI.

Building on these theoretical results, we developed a suite of low-complexity algorithms for joint subarray selection, user scheduling, and pilot assignment. By leveraging the derived statistical SINR approximations, these algorithms are designed to maximize the minimum spectral efficiency among scheduled users, thereby enhancing fairness. Our numerical results validate the high accuracy of the SINR approximations and demonstrate that the proposed statistical-CSI-driven framework achieves performance remarkably close to that of instantaneous-CSI-based benchmarks. The proposed methods effectively exploit the spatial sparsity of user visibility regions to enable more aggressive pilot reuse, significantly improving both fairness and throughput in crowded network scenarios. This work demonstrates that by relying on slowly-varying channel statistics, it is possible to design resource allocation schemes that are both highly efficient and practical for real-world XL-MIMO deployments.

\subsection{Future Research Directions}

The findings of this work open up several promising avenues for future research, including:

\begin{itemize}
\item {\it Hardware Impairments and Channel Aging}: Future studies could extend the deterministic SINR framework to incorporate the effects of practical hardware impairments, such as phase noise and power amplifier nonlinearity, as well as channel aging in high-mobility scenarios. Developing robust resource allocation algorithms would be a significant step toward practical implementation.

\item {\it Downlink Transmission and Precoding Design}: This work focused exclusively on the uplink. A natural extension is to develop a similar statistical-CSI-based framework for the downlink, including the design of low-complexity precoding schemes that can achieve near-optimal performance without requiring instantaneous CSI at the transmitter.

\item {\it Energy Efficiency Optimization}: While the proposed algorithms enhance spectral efficiency and fairness, they do not explicitly consider energy consumption. Future research could focus on developing resource allocation strategies that jointly optimize for spectral and energy efficiency, a critical requirement for sustainable 6G networks.

\item {\it Machine Learning-based Approaches}: The statistical nature of the proposed framework makes it well-suited for machine learning-based enhancements. Deep learning models could be trained to learn the complex relationships between long-term channel statistics and optimal resource allocation decisions, potentially leading to even lower-complexity and more adaptive solutions.

\item {\it Integration with Reconfigurable Intelligent Surfaces} (RIS): The synergy between XL-MIMO and RIS is a promising area of investigation. Future work could explore the joint design of subarray selection in XL-MIMO and passive beamforming at the RIS, leveraging statistical CSI to create a programmable and highly efficient wireless environment.

\item {\it  Near-Field Communications}: As XL-MIMO systems operate in the near-field, future research should more deeply investigate the implications of spherical wave propagation on channel modeling and resource allocation. Extending the deterministic SINR expressions to explicitly account for near-field effects would provide a more accurate analytical foundation for such systems.
\end{itemize}

\color{black}

\appendices
\section{Proof of Proposition \ref{prop:dist:globalSINRapprox_and_optimalweighting}}
\label{app:proof:globalSINRapprox_and_optimalweighting}

When $M \ge U$, the local \gls{ZF} combiner satisfies $\mathbf{v}_{kl}^\mathrm{H} \widehat{\mathbf{h}}_{il} = \delta_{ik}$. Under high \gls{SNR} and small channel estimation errors, the \gls{L-MMSE} combiner converges to the local \gls{ZF}, and thus obeys the approximation $\mathbf{v}_{kl}^\mathrm{H} \widehat{\mathbf{h}}_{il} \approx \delta_{ik}$, yielding the local and global \gls{SINR} approximations given by

\begin{align}
    \Gamma_{kl} &\approx \frac{p_k}{\displaystyle \mathbf{v}_{kl}^\mathrm{H} \left( \sum_{i\in\U} p_i \mathbf{C}_{il} + \sigma_\mathrm{n}^2 \eye_M \right) \mathbf{v}_{kl}},
\label{eq:dist:SINRkl_approx1}
\end{align}

\begin{align}
    \Gamma_k &\approx \frac{p_k}{\displaystyle \sum_{l\in\D_k} \mu_{kl}^2 \mathbf{v}_{kl}^\mathrm{H} \left( \sum_{i\in\U} p_i \mathbf{C}_{il} + \sigma_\mathrm{n}^2 \eye_M \right) \mathbf{v}_{kl}}.
\label{eq:dist:SINRk_approx1}
\end{align}

Corollary \ref{prop:dist:globalSINRapprox_and_optimalweighting} follows directly by rewriting \eqref{eq:dist:SINRk_approx1} and comparing it with the approximation of $\Gamma_{kl}^{-1}$ in \eqref{eq:dist:SINRkl_approx1}:
\begin{equation*}
    \Gamma_k^{-1} \approx \sum_{l\in\D_k} \mu_{kl}^2 \underbrace{ \frac{1}{p_k} \mathbf{v}_{kl}^\mathrm{H} \left( \sum_{i\in\U} p_i \mathbf{C}_{il} + \sigma_\mathrm{n}^2 \eye_M \right) \mathbf{v}_{kl} }_{\displaystyle \Gamma_{kl}^{-1}}.
\end{equation*}

Although \gls{L-MMSE} and local \gls{ZF} combiners perform similarly only under low channel estimation errors, the accuracy of the approximation \eqref{eq:dist:SINRk_from_SINRkl_optweights} may improve as these errors grow. The resulting increase in 
$\mathbf{C}_{il}$ dominates 
the \gls{SINR} denominator, rendering the neglected interference terms $\mathbf{v}_{kl}^\mathrm{H} \widehat{\mathbf{h}}_{il}$ negligible compared to the $\mathbf{v}_{kl}^\mathrm{H} \mathbf{C}_{il} \mathbf{v}_{kl}$ components.


The optimal weights for distributed uplink operation are obtained by maximizing the approximate global \gls{SINR} in \eqref{eq:dist:SINRk_from_SINRkl}, given the local \glspl{SINR}. As \(\mu_{kl} \geq 0\) and \(\Gamma_{kl} \geq 0\) for all \(l \in \D_k\), we have $\sum_{l\in\D_k} \mu_{kl}^2 \Gamma_{kl}^{-1} > 0$. Thus, maximizing $\widehat{\Gamma}_k$ is equivalent to solving
\begin{mini!}|s|[2]
{\{\mu_{kl}\}_{l\in\D_k}}
{\sum_{l\in\D_k} \mu_{kl}^2 \Gamma_{kl}^{-1}}
{\label{eq:maxSINRk_problem}}{\label{eq:maxSINRk_objective}}
\addConstraint{\sum_{l\in\D_k} \mu_{kl}}{= 1 \label{eq:maxSINRk_const1}}
\addConstraint{\mu_{kl}}{\geq 0, \quad \forall l \in \D_k \label{eq:maxSINRk_const2},}
\end{mini!}
where constraints \eqref{eq:maxSINRk_const1} and \eqref{eq:maxSINRk_const2} enforce nonnegative, unit-sum weights.
Neglecting \eqref{eq:maxSINRk_const2}, which will be satisfied a posteriori, the Lagrangian is
\begin{equation}
    \mathcal{L}_k = \sum_{l\in\D_k} \mu_{kl}^2 \Gamma_{kl}^{-1} + \lambda_k \left( \sum_{l\in\D_k} \mu_{kl} - 1 \right),
\label{eq:Lagrangian}
\end{equation}
where \(\lambda_k\) is the lagrangian multiplier. The first-order 
optimality condition gives\cite{boyd}
\begin{equation}
    \frac{\partial \mathcal{L}_k}{\partial \mu_{kl}} = 2 \Gamma_{kl}^{-1} \mu_{kl} + \lambda_k = 0, \quad l \in \D_k,
\label{eq:derivate_Lagrangian_equal0}
\end{equation}
which yields
\begin{equation}
    \mu_{kl} = - \frac{1}{2} \Gamma_{kl} \lambda_k, \quad l \in \D_k.
\label{eq:weights_function_of_multiplier}
\end{equation}


Substituting \eqref{eq:weights_function_of_multiplier} into \eqref{eq:maxSINRk_const1} results in $\lambda_k = -\frac{2}{\sum_{l\in\D_k} \Gamma_{kl}}$. Substituting this into \eqref{eq:weights_function_of_multiplier} completes the proof of \eqref{eq:dist:weights:optimal}, and inserting \eqref{eq:dist:weights:optimal} into \eqref{eq:dist:SINRk_from_SINRkl} yields 
\eqref{eq:dist:SINRk_from_SINRkl_optweights}.

\section{Proof of Theorem \ref{theo:SINR:ergodic}}
\label{app:proof:SINR_ergo_dist}

We first establish the following lemma regarding the expectation of the inverse of a random matrix.

\begin{lemma}
\label{lemma:inverse_expectation_approximation}
    Let $\mathbf{X}$ be an invertible random matrix with mean $\mathbf{A} = \mathbb{E}\{\mathbf{X}\}$. If the fluctuation $\Delta = \mathbf{X} - \mathbf{A}$ is sufficiently small to satisfy $\lVert \mathbf{A}^{-1}\Delta \rVert < 1$, then the 
    approximation $\mathbb{E}\{\mathbf{X}^{-1}\} \approx (\mathbb{E}\{\mathbf{X}\})^{-1}$ holds.
\end{lemma}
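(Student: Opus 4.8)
The plan is to expand $\mathbf{X}^{-1}$ as a Neumann series about its mean and exploit the fact that the first-order fluctuation has zero expectation. First I would factor $\mathbf{X} = \mathbf{A} + \Delta = \mathbf{A}(\eye + \mathbf{A}^{-1}\Delta)$, which is legitimate because $\mathbf{A} = \E\{\mathbf{X}\}$ is invertible. Inverting gives $\mathbf{X}^{-1} = (\eye + \mathbf{A}^{-1}\Delta)^{-1}\mathbf{A}^{-1}$, and under the hypothesis $\norm{\mathbf{A}^{-1}\Delta} < 1$ the Neumann series $(\eye + \mathbf{A}^{-1}\Delta)^{-1} = \sum_{n=0}^{\infty}(-\mathbf{A}^{-1}\Delta)^n$ converges in operator norm, so that
\begin{equation*}
\mathbf{X}^{-1} = \mathbf{A}^{-1} - \mathbf{A}^{-1}\Delta\mathbf{A}^{-1} + \mathbf{A}^{-1}\Delta\mathbf{A}^{-1}\Delta\mathbf{A}^{-1} - \cdots .
\end{equation*}

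Next I would take expectations term by term. The leading term is $\mathbf{A}^{-1} = (\E\{\mathbf{X}\})^{-1}$. The first-order term vanishes because $\E\{\Delta\} = \E\{\mathbf{X}\} - \mathbf{A} = \mathbf{0}$, hence $\E\{\mathbf{A}^{-1}\Delta\mathbf{A}^{-1}\} = \mathbf{A}^{-1}\E\{\Delta\}\mathbf{A}^{-1} = \mathbf{0}$. All remaining terms are of order $\norm{\mathbf{A}^{-1}\Delta}^2$ or higher; bounding the tail by the geometric series $\sum_{n\ge 2}\norm{\mathbf{A}^{-1}\Delta}^n = \norm{\mathbf{A}^{-1}\Delta}^2/(1-\norm{\mathbf{A}^{-1}\Delta})$ and taking expectation yields $\norm{\E\{\mathbf{X}^{-1}\} - \mathbf{A}^{-1}} = O\big(\E\{\norm{\mathbf{A}^{-1}\Delta}^2\}\big)$, which is negligible whenever the fluctuations of $\mathbf{X}$ about its mean are small. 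This gives the claimed approximation $\E\{\mathbf{X}^{-1}\} \approx (\E\{\mathbf{X}\})^{-1}$.

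The main obstacle I anticipate is rigorously justifying the interchange of expectation and infinite summation: this formally requires a dominated-convergence argument, and it is handled by observing that the same geometric expression dominates every partial sum, so the interchange is valid provided $\E\{\norm{\mathbf{A}^{-1}\Delta}/(1-\norm{\mathbf{A}^{-1}\Delta})\} < \infty$. I would also emphasize that the statement is intentionally an approximation rather than an identity — the discarded second- and higher-order moments of $\Delta$ are precisely what is neglected — so its accuracy is governed by how tightly $\mathbf{X}$ concentrates around $\mathbf{A}$. In the paper's use of the lemma this concentration is supplied by channel hardening and high user load, which make $\mathbf{Z}_{kl}$ (respectively $\mathbf{Z}_k$) concentrate about $\overline{\mathbf{Z}}_{kl}$ (respectively $\overline{\mathbf{Z}}_k$), so that the Neumann-series condition holds with a small fluctuation term.
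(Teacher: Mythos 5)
Your proposal is correct and follows essentially the same route as the paper's own proof: a Neumann-series expansion of $\mathbf{X}^{-1}$ about the mean $\mathbf{A}$, with the first-order term vanishing because $\mathbb{E}\{\Delta\}=\mathbf{0}$ and the remainder being $\mathcal{O}\big(\mathbb{E}\{\lVert\Delta\rVert^2\}\big)$. The extra care you take with the geometric tail bound and the dominated-convergence justification for exchanging expectation and summation only makes explicit what the paper's shorter argument leaves implicit.
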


\begin{proof}
    Using the Neumann series expansion \cite{horn2012matrix}, we have $\mathbf{X}^{-1} = (\mathbf{A} + \Delta)^{-1} = \mathbf{A}^{-1} \sum_{n=0}^{\infty} (-1)^n (\Delta \mathbf{A}^{-1})^n$. For small $\Delta$, truncating the series at $n=0$ and taking the expectation (noting that $\mathbb{E}\{\Delta\} = \mathbf{0}$) yields $\mathbb{E}\{\mathbf{X}^{-1}\} = \mathbf{A}^{-1} + \mathcal{O}(\mathbb{E}\{\|\Delta\|^2\})$, which justifies the approximation.
\end{proof}

The average local \gls{SINR} in \eqref{eq:dist:SINRkl_LMMSE} can be expressed as
\begin{equation}
    \E\{\Gamma_{kl}\} = p_k \tr(\E\{ \mathbf{Z}_{kl}^{-1} \widehat{\mathbf{h}}_{kl} \widehat{\mathbf{h}}_{kl}^\mathrm{H}\}).
\label{eq:dist:E_SINRkl}
\end{equation}

Under typical pilot-reuse patterns (where few or no \glspl{UE} share the pilot of \gls{UE} $k$) and limited \gls{LoS} overlap across subarrays, $\mathbf{Z}_{kl}$ and $\widehat{\mathbf{h}}_{kl}$ are approximately uncorrelated. Thus,
%
\begin{equation}
    \E\{ \mathbf{Z}_{kl}^{-1} \widehat{\mathbf{h}}_{kl} \widehat{\mathbf{h}}_{kl}^\mathrm{H} \}
    \approx 
    \E\{ \mathbf{Z}_{kl}^{-1} \} ( \mathbf{Q}_{kl} - \mathbf{C}_{kl} ).
\label{eq:dist:Zkl_hkl_uncorrelated}
\end{equation}

Furthermore, when $U \gg M$ or when subarray selection favors subarrays with a strong \gls{LoS} component to \gls{UE} $k$, the matrix \(\mathbf{Z}_{kl}\) concentrates around its mean \(\overline{\mathbf{Z}}_{kl} = \E\{\mathbf{Z}_{kl}\}\). In this regime, $\mathbf{Z}_{kl}$ exhibits small fluctuations, and Lemma \ref{lemma:inverse_expectation_approximation} implies
\begin{equation}
    \E\{\mathbf{Z}_{kl}^{-1}\} \approx
    \left(\E\{\mathbf{Z}_{kl}\}\right)^{-1} =
    \overline{\mathbf{Z}}_{kl}^{-1}.
\label{eq:dist:inv_exp_approx}
\end{equation}

Substituting 
\eqref{eq:dist:inv_exp_approx} into \eqref{eq:dist:Zkl_hkl_uncorrelated} yields
\begin{equation}
    \E\{\mathbf{Z}_{kl}^{-1} \widehat{\mathbf{h}}_{kl} \widehat{\mathbf{h}}_{kl}^\mathrm{H}\} \approx \overline{\mathbf{Z}}_{kl}^{-1} (\mathbf{Q}_{kl} - \mathbf{C}_{kl}).
\label{eq:dist:Zkl_hkl_uncorrelated_2}
\end{equation}

Inserting \eqref{eq:dist:Zkl_hkl_uncorrelated_2} into \eqref{eq:dist:E_SINRkl} completes the proof of \eqref{eq:dist:SINRkl:erg}. The proof of \eqref{eq:cent:SINRk:erg} follows similar steps.

\section{Proof of Corollary \ref{cor:complexity:ergodic}}
\label{app:dist:erglocalSINR:complexity}

Evaluating \eqref{eq:dist:SINRkl:erg} primarily involves computing the trace $\tr(\mathbf{A}^{-1} \mathbf{B})$, where $\mathbf{A} = \overline{\mathbf{Z}}_{kl}$ and $\mathbf{B} = \mathbf{Q}_{kl} - \mathbf{C}_{kl}$. Let $\mathbf{C} = \mathbf{A}^{-1} \mathbf{B}$. We efficiently compute the diagonal elements of $\mathbf{C}$ by solving the linear systems $\mathbf{A} \mathbf{c}_m = \mathbf{b}_m$ for $m = 1, \dots, M$, where $\mathbf{c}_m$ and $\mathbf{b}_m$ are the $m$-th columns of $\mathbf{C}$ and $\mathbf{B}$, respectively.

First, the $\mathbf{LDL}^\mathrm{H}$ decomposition of the Hermitian matrix $\mathbf{A}$ requires $\frac{1}{3}(M^3 - M)$ complex multiplications 
\cite[App. B.1.1]{massivemimobook}. The system $\mathbf{L} \mathbf{D} \mathbf{L}^\mathrm{H} \mathbf{c}_m = \mathbf{b}_m$ is then solved for the $m$-th row of each column $\mathbf{c}_m$ in three steps: (i) 
solving $\mathbf{L} \mathbf{z}_m = \mathbf{b}_m$ for each $m$ requires a total of $M \times \frac{1}{2}(M^2 - M)$ complex multiplications; (ii) 
solving $\mathbf{D} \mathbf{y}_m = \mathbf{z}_m$ 
involves $M$ complex-by-real divisions per column, \textit{i.e.}, $M \times 2M = 2M^2$ real divisions; and (iii) 
to obtain the diagonal element $[\mathbf{c}_m]_m$, the system $\mathbf{L}^\mathrm{H} \mathbf{c}_m = \mathbf{y}_m$ is solved from the $M$-th row up to the $m$-th row, which requires $
\sum_{n=0}^{M-m}n 
= \frac{(M-m)(M-m+1)}{2}$ complex multiplications. Hence, the total cost for all $m=1, \dots, M$ is $\sum_{m=1}^M \frac{(M-m)(M-m+1)}{2} = \frac{M^3 - M}{6}$ complex multiplications
.

Assuming each complex multiplication is implemented via three real multiplications and each real division is computationally equivalent to a real multiplication \cite[App. B.1.1]{massivemimobook}, the total complexity of computing $\Gamma_{kl}^\mathrm{erg}$ across the $L_k$ subarrays that serve \gls{UE} $k$ is equivalent to $\mathcal{C}_k^\mathrm{erg} = L_k [M^3 - M + \frac{3}{2}(M^3 - M^2) + 2M^2 + \frac{1}{2}(M^3 - M)]$
real multiplications, which simplifies to \eqref{eq:dist:SINRkl:erg:complexity}. The proof of \eqref{eq:cent:SINRk:erg:complexity} follows similar steps, by substituting the subarray antenna dimension $M$ with the total number of serving antennas $ML_k$.


\section{Proof of Theorem \ref{theo:SINR:asymptotic}}
\label{app:proof:SINR_asym_dist}

Under the assumption of spatially uncorrelated \gls{NLoS} components, where $\mathbf{C}_{il} = \frac{1}{M} \tr(\mathbf{C}_{il}) \mathbf{I}_M$, the instantaneous \gls{SINR} approximation in \eqref{eq:dist:SINRkl_approx1} reduces to\footnote{This remains accurate for correlated \gls{NLoS} components in the absence of pilot reuse. This is because the eigenvalues of the matrix term within parentheses in \eqref{eq:dist:SINRkl_approx1} are closely clustered without pilot reuse. Thus, the \gls{SINR} denominator presents negligible variations when replacing $\mathbf{C}_{il}$ with $\frac{1}{M} \tr(\mathbf{C}_{il}) \mathbf{I}_M$.
}
\begin{equation}
    \Gamma_{kl} \approx \frac{p_k}{ \left( \displaystyle\sum_{i\in\U} \frac{p_i}{M} \tr(\mathbf{C}_{il}) + \sigma_\mathrm{n}^2 \right) \left\lVert \mathbf{v}_{kl} \right\rVert^2 }.
\label{eq:dist:SINR_kl_approx1}
\end{equation}

The term $\lVert \mathbf{v}_{kl} \rVert^2$ is approximated using the local-ZF combiner, which behaves similarly to the \gls{L-MMSE} combiner at high \gls{SNR} and sufficiently accurate channel estimates. 
Letting $\U=\K$, the local ZF matrix at subarray $l$ is $\mathbf{V}_l = \widehat{\mathbf{H}}_l (\widehat{\mathbf{H}}_l^\mathrm{H} \widehat{\mathbf{H}}_l)^{-1}$, where $\widehat{\mathbf{H}}_l = [\widehat{\mathbf{h}}_{1l}, \dots, \widehat{\mathbf{h}}_{Kl}]$ and $\mathbf{v}_{kl}$ is the $k$-th column of $\mathbf{V}_l$. The block matrix inversion lemma yields \cite[Appendix A]{XL_MIMO_letter}
\begin{align}
    \lVert \mathbf{v}_{kl} \rVert^2
    &= [\mathbf{V}_l^\mathrm{H} \mathbf{V}_l]_{k,k}
    = [(\widehat{\mathbf{H}}_l^\mathrm{H} \widehat{\mathbf{H}}_l)^{-1}]_{k,k}
    \nonumber\\
    &= [\widehat{\mathbf{h}}_{kl}^\mathrm{H} (\mathbf{I}_M - \mathbf{P}_{kl}) \widehat{\mathbf{h}}_{kl}]^{-1},
\label{eq:dist:inv_HH_H}
\end{align}
where $\mathbf{P}_{kl} = \ddot{\mathbf{H}}_{kl} ( \ddot{\mathbf{H}}_{kl}^\mathrm{H} \ddot{\mathbf{H}}_{kl} )^{-1} \ddot{\mathbf{H}}_{kl}^\mathrm{H}$ and $\ddot{\mathbf{H}}_{kl} \in \mathbb{C}^{M \times (K-1)}$ denotes the estimated channel matrix at subarray $l$ excluding \gls{UE} $k$. Substituting \eqref{eq:dist:inv_HH_H} into \eqref{eq:dist:SINR_kl_approx1} results in
\begin{equation}
    \Gamma_{kl} \approx p_k \frac{\widehat{\mathbf{h}}_{kl}^\mathrm{H} (\mathbf{I}_M - \mathbf{P}_{kl}) \widehat{\mathbf{h}}_{kl}}{\displaystyle\sum_{i\in\U} \frac{p_i}{M} \tr(\mathbf{C}_{il}) + \sigma_\mathrm{n}^2}.
\label{eq:dist:SINRkl_approx2}
\end{equation}

\begin{lemma}
    The cross-correlation between the \gls{MMSE} estimates $\widehat{\mathbf{h}}_{kl}$ and $\widehat{\mathbf{h}}_{il}$ is given by \cite[page 107]{CFbook}:
    \begin{align}
        \E\{ \widehat{\mathbf{h}}_{kl} \widehat{\mathbf{h}}_{il}^\mathrm{H} \} = 
        \begin{cases}
            \mathbf{Q}_{kl} - \mathbf{C}_{kl}, & i = k, \\
            \overline{\mathbf{h}}_{kl} \overline{\mathbf{h}}_{il}^\mathrm{H} + \mathbf{A}_{kil}, & t_i = t_k, \\
            \overline{\mathbf{h}}_{kl} \overline{\mathbf{h}}_{il}^\mathrm{H}, & \text{otherwise,}
        \end{cases}
    \label{eq:E_hklest_hilestH}
    \end{align}
    where $\mathbf{A}_{kil} = \sqrt{p_k p_i} \tau_\mathrm{p} \mathbf{R}_{kl} \boldsymbol{\Psi}_{t_kl}^{-1} \mathbf{R}_{il}$. Furthermore, for any matrices $\mathbf{A}, \mathbf{B} \in \mathbb{C}^{M \times M}$ independent of the channel estimates, 
    the quadratic form $\widehat{\mathbf{h}}_{il}^\mathrm{H} \mathbf{B} \mathbf{A} \widehat{\mathbf{h}}_{kl}$ converges to its expectation as $M \to \infty$ \cite[Lemma~4]{ergodicSINRproof}:
    \begin{equation}
        \widehat{\mathbf{h}}_{il}^\mathrm{H} \mathbf{B} \mathbf{A} \widehat{\mathbf{h}}_{kl}
        \xrightarrow[M \to \infty]{a.s.}
        \E\{\widehat{\mathbf{h}}_{il}^\mathrm{H} \mathbf{B} \mathbf{A} \widehat{\mathbf{h}}_{kl}\}
        = \tr(\mathbf{A} \E\{\widehat{\mathbf{h}}_{kl} \widehat{\mathbf{h}}_{il}^\mathrm{H}\} \mathbf{B}).
    \label{eq:law_of_large_numbers}
    \end{equation}
    Applying \eqref{eq:E_hklest_hilestH}, this yields
    \begin{equation}
        \widehat{\mathbf{h}}_{il}^\mathrm{H} \mathbf{B} \mathbf{A} \widehat{\mathbf{h}}_{kl}
        \xrightarrow[M \to \infty]{a.s.}
        \begin{cases}
            \tr[\mathbf{A} (\mathbf{Q}_{kl} - \mathbf{C}_{kl}) \mathbf{B}], & i = k,\\
            \tr[\mathbf{A} (\overline{\mathbf{h}}_{kl} \overline{\mathbf{h}}_{il}^\mathrm{H} + \mathbf{A}_{kil}) \mathbf{B}],
            & t_i = t_k,\\
            \overline{\mathbf{h}}_{il}^\mathrm{H} \mathbf{B} \mathbf{A} \overline{\mathbf{h}}_{kl}, & t_i \neq t_k.
        \end{cases}
    \label{eq:law_of_large_numbers_2}
    \end{equation}
\end{lemma}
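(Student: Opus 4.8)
The plan is to treat the lemma as two logically separate claims that are then composed. The first claim is the closed-form cross-correlation \eqref{eq:E_hklest_hilestH}, which is a pure \gls{MMSE}-estimation identity (also recorded in \cite[page~107]{CFbook}); the second is the almost-sure convergence of the bilinear form \eqref{eq:law_of_large_numbers}, a standard random-matrix trace-lemma statement; and \eqref{eq:law_of_large_numbers_2} is then obtained simply by substituting the three cases of \eqref{eq:E_hklest_hilestH} into the limit and simplifying. I would prove the two claims in that order and finish with the substitution.

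For \eqref{eq:E_hklest_hilestH}, I would first write the \gls{MMSE} estimate as its mean plus a zero-mean colored term driven by the pilot-observation noise, $\widehat{\mathbf{h}}_{kl} = \overline{\mathbf{h}}_{kl} + \sqrt{p_k}\,\mathbf{R}_{kl}\boldsymbol{\Psi}_{t_kl}^{-1}\mathbf{z}_{t_kl}$ (cf.\ \cite{[2019]Massive_MIMO_With_Spatially_Correlated_Rician_Fading_Channels}), where $\mathbf{z}_{t_kl}$ is the centered despread observation associated with pilot $t_k$ at subarray $l$, which is zero-mean Gaussian with $\E\{\mathbf{z}_{t_kl}\mathbf{z}_{t_kl}^\mathrm{H}\} = \tau_\mathrm{p}\boldsymbol{\Psi}_{t_kl}$. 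The structural facts that make the three cases appear are that pilot orthogonality \eqref{eq:pilot_orthogonality}, together with independence of the \gls{NLoS} components and the pilot noise, imply $\mathbf{z}_{t_il} = \mathbf{z}_{t_kl}$ whenever $t_i = t_k$ and $\E\{\mathbf{z}_{t_kl}\mathbf{z}_{t_il}^\mathrm{H}\} = \mathbf{0}$ whenever $t_i \ne t_k$. Expanding $\widehat{\mathbf{h}}_{kl}\widehat{\mathbf{h}}_{il}^\mathrm{H}$ and taking expectation, the two \gls{LoS}-by-noise cross terms vanish by zero mean, the deterministic term contributes $\overline{\mathbf{h}}_{kl}\overline{\mathbf{h}}_{il}^\mathrm{H}$, and the noise-by-noise term is $\sqrt{p_kp_i}\,\mathbf{R}_{kl}\boldsymbol{\Psi}_{t_kl}^{-1}\E\{\mathbf{z}_{t_kl}\mathbf{z}_{t_il}^\mathrm{H}\}\boldsymbol{\Psi}_{t_il}^{-1}\mathbf{R}_{il}$, which collapses to $\mathbf{A}_{kil} = \sqrt{p_kp_i}\tau_\mathrm{p}\mathbf{R}_{kl}\boldsymbol{\Psi}_{t_kl}^{-1}\mathbf{R}_{il}$ when $t_i = t_k$ and to $\mathbf{0}$ when $t_i \ne t_k$. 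For $i=k$ this noise term reads $p_k\tau_\mathrm{p}\mathbf{R}_{kl}\boldsymbol{\Psi}_{t_kl}^{-1}\mathbf{R}_{kl}$, and the first case then follows from the identity $\mathbf{Q}_{kl} - \mathbf{C}_{kl} = \overline{\mathbf{h}}_{kl}\overline{\mathbf{h}}_{kl}^\mathrm{H} + p_k\tau_\mathrm{p}\mathbf{R}_{kl}\boldsymbol{\Psi}_{t_kl}^{-1}\mathbf{R}_{kl}$ read off from \eqref{eq:Ckl_MMSE}.

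For \eqref{eq:law_of_large_numbers}, the stated equality $\E\{\widehat{\mathbf{h}}_{il}^\mathrm{H}\mathbf{B}\mathbf{A}\widehat{\mathbf{h}}_{kl}\} = \tr(\mathbf{A}\,\E\{\widehat{\mathbf{h}}_{kl}\widehat{\mathbf{h}}_{il}^\mathrm{H}\}\,\mathbf{B})$ is immediate from linearity of expectation and the cyclic property of the trace, since $\mathbf{A},\mathbf{B}$ are independent of the estimates. For the convergence itself I would substitute $\widehat{\mathbf{h}}_{kl} = \overline{\mathbf{h}}_{kl} + \mathbf{T}_{kl}\mathbf{z}_{t_kl}$ with $\mathbf{T}_{kl} = \sqrt{p_k}\mathbf{R}_{kl}\boldsymbol{\Psi}_{t_kl}^{-1}$ and expand $\widehat{\mathbf{h}}_{il}^\mathrm{H}\mathbf{B}\mathbf{A}\widehat{\mathbf{h}}_{kl}$ into three groups: (i) the deterministic term $\overline{\mathbf{h}}_{il}^\mathrm{H}\mathbf{B}\mathbf{A}\overline{\mathbf{h}}_{kl}$; (ii) two terms linear in the Gaussian perturbations, whose deviations are negligible at the $\mathcal{O}(M)$ scale almost surely by a law of large numbers; and (iii) the quadratic form $\mathbf{z}_{t_il}^\mathrm{H}\mathbf{T}_{il}^\mathrm{H}\mathbf{B}\mathbf{A}\mathbf{T}_{kl}\mathbf{z}_{t_kl}$, which---using $\mathbf{z}_{t_il}=\mathbf{z}_{t_kl}$ when $t_i=t_k$ and independence otherwise---concentrates almost surely (in the $1/M$-normalized sense) on $\tr(\mathbf{T}_{il}^\mathrm{H}\mathbf{B}\mathbf{A}\mathbf{T}_{kl}\,\E\{\mathbf{z}_{t_kl}\mathbf{z}_{t_il}^\mathrm{H}\})$ by the classical trace lemma \cite[Lemma~4]{ergodicSINRproof}, under the usual regularity conditions $\limsup_M \lVert\mathbf{A}\rVert,\lVert\mathbf{B}\rVert<\infty$ and $\limsup_M \frac{1}{M}\lVert\overline{\mathbf{h}}_{kl}\rVert^2<\infty$. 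Re-assembling (i)--(iii) and reusing the cyclic trace identity recovers $\tr(\mathbf{A}\,\E\{\widehat{\mathbf{h}}_{kl}\widehat{\mathbf{h}}_{il}^\mathrm{H}\}\,\mathbf{B})$; inserting the three cases of \eqref{eq:E_hklest_hilestH} and rewriting the rank-one $t_i\ne t_k$ term $\tr(\mathbf{A}\overline{\mathbf{h}}_{kl}\overline{\mathbf{h}}_{il}^\mathrm{H}\mathbf{B})$ as the scalar $\overline{\mathbf{h}}_{il}^\mathrm{H}\mathbf{B}\mathbf{A}\overline{\mathbf{h}}_{kl}$ yields \eqref{eq:law_of_large_numbers_2}.

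The main obstacle is making step (iii) rigorous: one must invoke the random-matrix trace lemma with the correct $1/M$ normalization together with a fourth-moment (Borel--Cantelli) control of the quadratic-form deviations, and one must ensure the implicit scaling hypotheses hold as the array grows---namely that $\boldsymbol{\Psi}_{tl}$ stays uniformly well conditioned and the \gls{LoS} energy grows at most linearly in $M$. These conditions are precisely the content of the cited lemma; once they are granted, the remainder is the elementary \gls{MMSE} algebra of the first step together with the routine bookkeeping of the expansion.
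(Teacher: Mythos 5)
Your proposal is correct, and it is worth noting that the paper itself does not actually prove this lemma: it quotes the cross-correlation identity from \cite[page 107]{CFbook} and the almost-sure convergence of the quadratic form from \cite[Lemma~4]{ergodicSINRproof}, and obtains \eqref{eq:law_of_large_numbers_2} by direct substitution. Your derivation fills in exactly the content of those citations, and it does so along the same lines as the cited sources: writing $\widehat{\mathbf{h}}_{kl}=\overline{\mathbf{h}}_{kl}+\sqrt{p_k}\,\mathbf{R}_{kl}\boldsymbol{\Psi}_{t_kl}^{-1}\mathbf{z}_{t_kl}$ with $\E\{\mathbf{z}_{t_kl}\mathbf{z}_{t_kl}^{\mathrm H}\}=\tau_\mathrm{p}\boldsymbol{\Psi}_{t_kl}$, using that pilot-sharing UEs see the \emph{same} despread observation while distinct pilots give independent ones, and checking the $i=k$ case against \eqref{eq:Ckl_MMSE} via $\mathbf{Q}_{kl}-\mathbf{C}_{kl}=\overline{\mathbf{h}}_{kl}\overline{\mathbf{h}}_{kl}^{\mathrm H}+p_k\tau_\mathrm{p}\mathbf{R}_{kl}\boldsymbol{\Psi}_{t_kl}^{-1}\mathbf{R}_{kl}$; the trace identity and the rank-one rewriting of the $t_i\neq t_k$ case are exactly as in the paper. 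Your treatment of the convergence step is in fact more careful than the paper's: you make explicit the $1/M$ normalization, the bounded-spectral-norm and bounded LoS-energy hypotheses, and the fourth-moment/Borel--Cantelli control that the trace lemma presupposes, whereas the paper (following its references) states the unnormalized limit informally. The only caveat is that these regularity conditions are assumptions you must carry along, not consequences of the system model as stated; flagging them, as you do, is the right resolution.
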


The diagonal elements of $\ddot{\mathbf{H}}_{kl}^\mathrm{H} \ddot{\mathbf{H}}_{kl}$ consist of the squared norms $\lVert \widehat{\mathbf{h}}_{il} \rVert^2$ for $i\in\K\setminus\{k\}$, which converge to $\tr\left( \mathbf{Q}_{il} - \mathbf{C}_{il} \right)$ as $M\to\infty$, according to \eqref{eq:law_of_large_numbers_2}. In \gls{XL-MIMO}, the off-diagonal terms are negligible due to the channel spatial non-stationarity.\footnote{From \eqref{eq:law_of_large_numbers_2}, the off-diagonal elements $\widehat{\mathbf{h}}_{il}^\mathrm{H} \widehat{\mathbf{h}}_{i'l}$ ($i \neq i'$) converge to $\overline{\mathbf{h}}_{il}^\mathrm{H} \overline{\mathbf{h}}_{i'l}$ as $M \to \infty$, since \glspl{UE} are typically assigned distinct pilot sequences. In \gls{XL-MIMO}, however, these off-diagonal terms are negligible compared to the diagonal components, since the large array dimensions ensure that \gls{LoS} footprints of different \glspl{UE} have minimal overlap across subarrays. This contrasts with conventional massive MIMO, where \glspl{VR} typically encompass the entire co-located array.} We can thus approximate \eqref{eq:dist:SINRkl_approx2} by
\begin{equation}
    \Gamma_{kl} \approx p_k \frac{\widehat{\mathbf{h}}_{kl}^\mathrm{H} \widehat{\mathbf{h}}_{kl} - \widehat{\mathbf{h}}_{kl}^\mathrm{H} \ddot{\mathbf{H}}_{kl} \mathbf{T}_{kl}^{-1} \ddot{\mathbf{H}}_{kl}^\mathrm{H} \widehat{\mathbf{h}}_{kl}}{\displaystyle\sum_{i\in\U} \frac{p_i}{M} \tr(\mathbf{C}_{il}) + \sigma_\mathrm{n}^2}.
\label{eq:dist:SINRkl:asy:XLMIMOonly_approx}
\end{equation}
where $\mathbf{T}_{kl} = \diag(\{\lVert \widehat{\mathbf{h}}_{il} \rVert^2\}_{i\in\U\setminus\{k\}})$ contains only the diagonal terms of $\ddot{\mathbf{H}}_{kl}^\mathrm{H} \ddot{\mathbf{H}}_{kl}$.

Applying the asymptotic results from \eqref{eq:law_of_large_numbers_2}, the terms in the numerator converge as follows:
\begin{equation}
    \widehat{\mathbf{h}}_{kl}^\mathrm{H} \widehat{\mathbf{h}}_{kl} \xrightarrow{M\to\infty} \tr(\mathbf{Q}_{kl} - \mathbf{C}_{kl})
\label{eq:dist:SINRkl:largeNapprox_1stterm}
\end{equation}
and
\begin{align}
    \widehat{\mathbf{h}}_{kl}^\mathrm{H} \ddot{\mathbf{H}}_{kl} \mathbf{T}_{kl}^{-1} \ddot{\mathbf{H}}_{kl}^\mathrm{H}
    \widehat{\mathbf{h}}_{kl}
    &= \sum_{i\in\U \setminus \{k\}} \frac{ \widehat{\mathbf{h}}_{kl}^\mathrm{H} \widehat{\mathbf{h}}_{il} \widehat{\mathbf{h}}_{il}^\mathrm{H} \widehat{\mathbf{h}}_{kl} }{ \widehat{\mathbf{h}}_{il}^\mathrm{H} \widehat{\mathbf{h}}_{il} }
    \nonumber\\
    &\xrightarrow{M\to\infty} \frac{\tr(\mathbf{X}_{kl} \mathbf{X}_{il})}{\tr(\mathbf{X}_{il})}.
\label{eq:dist:SINRkl:largeNapprox_2ndterm}
\end{align}

Substituting these into \eqref{eq:dist:SINRkl:asy:XLMIMOonly_approx} concludes the proof of \eqref{eq:dist:SINRkl:asy}. The proof of \eqref{eq:cent:SINRk:asy} folllows similar steps.


\begin{thebibliography}{10}
\providecommand{\url}[1]{#1}
\csname url@samestyle\endcsname
\providecommand{\newblock}{\relax}
\providecommand{\bibinfo}[2]{#2}
\providecommand{\BIBentrySTDinterwordspacing}{\spaceskip=0pt\relax}
\providecommand{\BIBentryALTinterwordstretchfactor}{4}
\providecommand{\BIBentryALTinterwordspacing}{\spaceskip=\fontdimen2\font plus
\BIBentryALTinterwordstretchfactor\fontdimen3\font minus
  \fontdimen4\font\relax}
\providecommand{\BIBforeignlanguage}[2]{{%
\expandafter\ifx\csname l@#1\endcsname\relax
\typeout{** WARNING: IEEEtran.bst: No hyphenation pattern has been}%
\typeout{** loaded for the language `#1'. Using the pattern for}%
\typeout{** the default language instead.}%
\else
\language=\csname l@#1\endcsname
\fi
#2}}
\providecommand{\BIBdecl}{\relax}
\BIBdecl

\bibitem{2024_XLMIMO_tutorial}
Z.~Wang, J.~Zhang, H.~Du, D.~Niyato, S.~Cui, B.~Ai, M.~Debbah, K.~B. Letaief,
  and H.~V. Poor, ``A tutorial on extremely large-scale {MIMO} for {6G}:
  Fundamentals, signal processing, and applications,'' \emph{IEEE
  Communications Surveys \& Tutorials}, vol.~26, no.~3, pp. 1560--1605, 2024.

\bibitem{6824752}
J.~G. Andrews, S.~Buzzi, W.~Choi, S.~V. Hanly, A.~Lozano, A.~C.~K. Soong, and
  J.~C. Zhang, ``What will {5G} be?'' \emph{IEEE Journal on Selected Areas in
  Communications}, vol.~32, no.~6, pp. 1065--1082, 2014.

\bibitem{7894280}
M.~Shafi, A.~F. Molisch, P.~J. Smith, T.~Haustein, P.~Zhu, P.~De~Silva,
  F.~Tufvesson, A.~Benjebbour, and G.~Wunder, ``{5G}: A tutorial overview of
  standards, trials, challenges, deployment, and practice,'' \emph{IEEE Journal
  on Selected Areas in Communications}, vol.~35, no.~6, pp. 1201--1221, 2017.

\bibitem{7414384}
M.~Agiwal, A.~Roy, and N.~Saxena, ``Next generation {5G} wireless networks: A
  comprehensive survey,'' \emph{IEEE Communications Surveys \& Tutorials},
  vol.~18, no.~3, pp. 1617--1655, 2016.

\bibitem{CFbook}
\BIBentryALTinterwordspacing
Özlem Tugfe~Demir, E.~Björnson, and L.~Sanguinetti, ``Foundations of
  user-centric cell-free massive {MIMO},'' \emph{Foundations and Trends® in
  Signal Processing}, vol.~14, no. 3-4, pp. 162--472, 2021. [Online].
  Available: \url{http://dx.doi.org/10.1561/2000000109}
\BIBentrySTDinterwordspacing

\bibitem{Bjornson2019_WhatIsNext}
E.~Bj{\"o}rnson, L.~Sanguinetti, H.~Wymeersch, J.~Hoydis, and T.~L. Marzetta,
  ``Massive {MIMO} is a reality—what is next?: Five promising research
  directions for antenna arrays,'' \emph{Digital Signal Processing}, vol.~94,
  pp. 3--20, 2019.

\bibitem{ChannelEstimation_XLMIMO}
Y.~Han, S.~Jin, C.-K. Wen, and X.~Ma, ``Channel estimation for extremely
  large-scale massive {MIMO} systems,'' \emph{IEEE Wireless Communications
  Letters}, vol.~9, no.~5, pp. 633--637, 2020.

\bibitem{9529197}
J.~Zhang, J.~Zhang, E.~Björnson, and B.~Ai, ``Local partial zero-forcing
  combining for cell-free massive {MIMO} systems,'' \emph{IEEE Transactions on
  Communications}, vol.~69, no.~12, pp. 8459--8473, 2021.

\bibitem{9903389}
M.~Cui, Z.~Wu, Y.~Lu, X.~Wei, and L.~Dai, ``Near-field {MIMO} communications
  for {6G}: Fundamentals, challenges, potentials, and future directions,''
  \emph{IEEE Communications Magazine}, vol.~61, no.~1, pp. 40--46, 2023.

\bibitem{carvalho2020nonstationarities}
E.~D. Carvalho, A.~Ali, A.~Amiri, M.~Angjelichinoski, and R.~W. Heath,
  ``Non-stationarities in extra-large-scale massive {MIMO},'' \emph{IEEE
  Wireless Communications}, vol.~27, no.~4, pp. 74--80, 2020.

\bibitem{Marinello2020}
J.~C. {Marinello}, T.~{Abrão}, A.~{Amiri}, E.~{de Carvalho}, and
  P.~{Popovski}, ``Antenna selection for improving energy efficiency in
  {XL-MIMO} systems,'' \emph{IEEE Transactions on Vehicular Technology},
  vol.~69, no.~11, pp. 13\,305--13\,318, 2020.

\bibitem{Location-Based_VR_Recognition}
D.~Liu, J.~Wang, Y.~Li, Y.~Han, R.~Ding, J.~Zhang, S.~Jin, and T.~Q.~S. Quek,
  ``Location-based visible region recognition in extra-large massive {MIMO}
  systems,'' \emph{IEEE Transactions on Vehicular Technology}, vol.~72, no.~6,
  pp. 8186--8191, 2023.

\bibitem{JOAO2023}
J.~H.~I. de~Souza, J.~C. Marinello, A.~Amiri, and T.~Abrão, ``Qos-aware user
  scheduling in crowded {XL-MIMO} systems under non-stationary multi-state
  {LoS}/{NLoS} channels,'' \emph{IEEE Transactions on Vehicular Technology},
  pp. 1--13, 2023.

\bibitem{10857377}
S.~Mashdour, A.~R. Flores, S.~Salehi, R.~C. De~Lamare, A.~Schmeink, and P.~B.
  Da~Silva, ``Robust resource allocation in cell-free massive {MIMO} systems,''
  \emph{IEEE Transactions on Communications}, pp. 1--1, 2025.

\bibitem{Bjornson2020_making_centralizedCF_competitive}
E.~Björnson and L.~Sanguinetti, ``Making cell-free massive {MIMO} competitive
  with {MMSE} processing and centralized implementation,'' \emph{IEEE
  Transactions on Wireless Communications}, vol.~19, no.~1, pp. 77--90, 2020.

\bibitem{CF_PA_HungarianAlgorithm}
S.~Buzzi, C.~D’Andrea, M.~Fresia, Y.-P. Zhang, and S.~Feng, ``Pilot
  assignment in cell-free massive {MIMO} based on the hungarian algorithm,''
  \emph{IEEE Wireless Communications Letters}, vol.~10, no.~1, pp. 34--37,
  2021.

\bibitem{9618945}
K.~Zhi, C.~Pan, H.~Ren, and K.~Wang, ``Ergodic rate analysis of reconfigurable
  intelligent surface-aided massive {MIMO} systems with {ZF} detectors,''
  \emph{IEEE Communications Letters}, vol.~26, no.~2, pp. 264--268, 2022.

\bibitem{8952782}
C.~D’Andrea, A.~Garcia-Rodriguez, G.~Geraci, L.~G. Giordano, and S.~Buzzi,
  ``Analysis of {UAV} communications in cell-free massive {MIMO} systems,''
  \emph{IEEE Open Journal of the Communications Society}, vol.~1, pp. 133--147,
  2020.

\bibitem{10817111}
H.~Lei, J.~Zhang, Z.~Wang, H.~Xiao, and B.~Ai, ``Uplink performance of
  cell-free extremely large-scale {MIMO} systems,'' \emph{IEEE Transactions on
  Vehicular Technology}, vol.~74, no.~5, pp. 8339--8344, 2025.

\bibitem{10858168}
Y.~Zhang, W.~Xia, H.~Zhao, Y.~Mao, J.~Zhang, and G.~Zheng, ``Enhancing uplink
  performance for cell-free massive {MIMO} with low-resolution {ADC}s by
  {RSMA},'' \emph{IEEE Journal on Selected Areas in Communications}, vol.~43,
  no.~3, pp. 720--735, 2025.

\bibitem{Hosany2022}
\BIBentryALTinterwordspacing
M.~A. Hosany, ``On the performance analysis of multi-user massive {MIMO}
  systems with error vector signals for {5G} cellular networks,''
  \emph{Wireless Personal Communications}, vol. 123, no.~2, pp. 917--934, 2022.
  [Online]. Available: \url{https://doi.org/10.1007/s11277-021-09162-z}
\BIBentrySTDinterwordspacing

\bibitem{10192084}
J.~Li, Q.~Pan, Z.~Wan, P.~Zhu, D.~Wang, M.~Lou, J.~Jin, F.~Liu, and X.~You,
  ``Low altitude {3-D} coverage performance analysis of cell-free {RAN} for
  {6G} systems,'' \emph{IEEE Transactions on Vehicular Technology}, vol.~72,
  no.~12, pp. 16\,163--16\,176, 2023.

\bibitem{9099874}
S.-N. Jin, D.-W. Yue, and H.~H. Nguyen, ``Spectral and energy efficiency in
  cell-free massive {MIMO} systems over correlated rician fading,'' \emph{IEEE
  Systems Journal}, vol.~15, no.~2, pp. 2822--2833, 2021.

\bibitem{10770206}
Y.~Fan, S.~Wu, X.~Bi, and G.~Li, ``Power allocation for cell-free massive
  {MIMO} {ISAC} systems with otfs signal,'' \emph{IEEE Internet of Things
  Journal}, vol.~12, no.~7, pp. 9314--9331, 2025.

\bibitem{statistical_CSI_based_algorithms}
X.~Yang, F.~Cao, M.~Matthaiou, and S.~Jin, ``On the uplink transmission of
  extra-large scale massive {MIMO} systems,'' \emph{IEEE Transactions on
  Vehicular Technology}, vol.~69, no.~12, pp. 15\,229--15\,243, 2020.

\bibitem{XL_MIMO_letter}
A.~{Ali}, E.~D. {Carvalho}, and R.~W. {Heath}, ``Linear receivers in
  non-stationary massive {MIMO} channels with visibility regions,'' \emph{IEEE
  Wireless Communications Letters}, vol.~8, no.~3, pp. 885--888, June 2019.

\bibitem{9174860}
S.~Chen, J.~Zhang, E.~Björnson, J.~Zhang, and B.~Ai, ``Structured massive
  access for scalable cell-free massive {MIMO} systems,'' \emph{IEEE Journal on
  Selected Areas in Communications}, vol.~39, no.~4, pp. 1086--1100, 2021.

\bibitem{9042221}
J.~Qiu, K.~Xu, X.~Xia, Z.~Shen, and W.~Xie, ``Downlink power optimization for
  cell-free massive {MIMO} over spatially correlated rayleigh fading
  channels,'' \emph{IEEE Access}, vol.~8, pp. 56\,214--56\,227, 2020.

\bibitem{9462554}
T.~H. Nguyen, T.~V. Chien, H.~Q. Ngo, X.~N. Tran, and E.~Björnson, ``Pilot
  assignment for joint uplink-downlink spectral efficiency enhancement in
  massive {MIMO} systems with spatial correlation,'' \emph{IEEE Transactions on
  Vehicular Technology}, vol.~70, no.~8, pp. 8292--8297, 2021.

\bibitem{9018056}
M.~Zhou, Y.~Zhang, X.~Qiao, and L.~Yang, ``Spatially correlated rayleigh fading
  for cell-free massive {MIMO} systems,'' \emph{IEEE Access}, vol.~8, pp.
  42\,154--42\,168, 2020.

\bibitem{10201892}
Q.~Sun, X.~Ji, Z.~Wang, X.~Chen, Y.~Yang, J.~Zhang, and K.-K. Wong, ``Uplink
  performance of hardware-impaired cell-free massive {MIMO} with multi-antenna
  users and superimposed pilots,'' \emph{IEEE Transactions on Communications},
  vol.~71, no.~11, pp. 6711--6726, 2023.

\bibitem{Li2024}
\BIBentryALTinterwordspacing
N.~Li and P.~Fan, ``Joint data power control and {LSFD} design in distributed
  cell-free massive {MIMO} under non-ideal ue hardware,'' \emph{EURASIP Journal
  on Wireless Communications and Networking}, vol. 2024, no.~9, pp. 1--17,
  2024. [Online]. Available: \url{https://doi.org/10.1186/s13638-024-02334-y}
\BIBentrySTDinterwordspacing

\bibitem{ergodicSINRproof}
S.~Wagner, R.~Couillet, M.~Debbah, and D.~T.~M. Slock, ``Large system analysis
  of linear precoding in correlated miso bro{ADC}ast channels under limited
  feedback,'' \emph{IEEE Transactions on Information Theory}, vol.~58, no.~7,
  pp. 4509--4537, 2012.

\bibitem{10507003}
H.~Haritha, D.~N. Amudala, R.~Budhiraja, and A.~K. Chaturvedi, ``Superimposed
  versus regular pilots for hardware impaired rician-faded cell-free massive
  {MIMO} systems,'' \emph{IEEE Transactions on Communications}, vol.~72,
  no.~11, pp. 6688--6706, 2024.

\bibitem{9684861}
V.~Tentu, E.~Sharma, D.~N. Amudala, and R.~Budhiraja, ``{UAV}-enabled
  hardware-impaired spatially correlated cell-free massive {MIMO} systems:
  Analysis and energy efficiency optimization,'' \emph{IEEE Transactions on
  Communications}, vol.~70, no.~4, pp. 2722--2741, 2022.

\bibitem{9276421}
Z.~Wang, J.~Zhang, E.~Björnson, and B.~Ai, ``Uplink performance of cell-free
  massive {MIMO} over spatially correlated rician fading channels,'' \emph{IEEE
  Communications Letters}, vol.~25, no.~4, pp. 1348--1352, 2021.

\bibitem{10041946}
X.~Ma, X.~Lei, P.~T. Mathiopoulos, K.~Yu, and X.~Tang, ``Scalable cell-free
  massive {MIMO} systems with finite resolution {{ADC}}s/{DAC}s over spatially
  correlated rician fading channels,'' \emph{IEEE Transactions on Vehicular
  Technology}, vol.~72, no.~6, pp. 7699--7716, 2023.

\bibitem{Kaur2023}
\BIBentryALTinterwordspacing
H.~Kaur and A.~Kansal, ``Closed-form analysis of {RZF} in multicell massive
  {MIMO} over correlated rician channel,'' \emph{Wireless Personal
  Communications}, vol. 131, no.~3, pp. 1685--1719, 2023. [Online]. Available:
  \url{https://doi.org/10.1007/s11277-023-10518-w}
\BIBentrySTDinterwordspacing

\bibitem{LOS_NLOS_2022}
S.~Mukherjee and R.~Chopra, ``Performance analysis of cell-free massive {MIMO}
  systems in {LoS}/ {NLoS} channels,'' \emph{IEEE Transactions on Vehicular
  Technology}, vol.~71, no.~6, pp. 6410--6423, 2022.

\bibitem{[2019]Massive_MIMO_With_Spatially_Correlated_Rician_Fading_Channels}
O.~Ozdogan, E.~Bjornson, and E.~G. Larsson, ``Massive {MIMO} with spatially
  correlated rician fading channels,'' \emph{IEEE Transactions on
  Communications}, vol.~67, no.~5, pp. 3234--3250, 2019.

\bibitem{massivemimobook}
\BIBentryALTinterwordspacing
E.~Bj\"{o}rnson, J.~Hoydis, and L.~Sanguinetti, ``Massive {MIMO} networks:
  {Spectral}, energy, and hardware efficiency,'' \emph{Foundations and
  Trends{\textregistered} in Signal Processing}, vol.~11, no. 3-4, pp.
  154--655, 2017. [Online]. Available:
  \url{http://dx.doi.org/10.1561/2000000093}
\BIBentrySTDinterwordspacing

\bibitem{boyd}
S.~Boyd and L.~Vandenberghe, \emph{Convex Optimization}.\hskip 1em plus 0.5em
  minus 0.4em\relax Cambridge University Press, 2004.

\bibitem{horn2012matrix}
R.~A. Horn and C.~R. Johnson, \emph{Matrix Analysis}, 2nd~ed.\hskip 1em plus
  0.5em minus 0.4em\relax Cambridge University Press, 2012.

\end{thebibliography}
\end{document}